\newtheorem{Theorem}{Theorem}
\newcommand{\leaveout}[1]{}
\begin{document}

\title{Multidefender Security Games}

\author{\name Jian Lou \email jian.lou@vanderbilt.edu\\
       \addr Electrical Engineering and Computer Science, Vanderbilt University
       \AND
       \name Andrew M. Smith \email amsmi@ucdavis.edu\\
       \addr Computer Science, University of California at Davis and
       \addr Sandia National Laboratories
       \AND
       \name Yevgeniy Vorbeychik \email yevgeniy.vorobeychik@vanderbilt.edu\\
       \addr Electrical Engineering and Computer Science, Vanderbilt University}
\maketitle
\begin{abstract}
Stackelberg security game models and associated computational tools
have seen deployment in a number of high-consequence security
settings, such as LAX canine patrols and Federal Air Marshal
Service. These models primarily focus on isolated systems where only
one defender is present, despite being part of a more complex system
with multiple players. Furthermore, many real systems such as
transportation networks and the power grid exhibit interdependencies
between targets and, consequently, between decision makers jointly
charged with protecting them. 
In order to understand such multi-defender strategic interactions
present in security, we investigate game theoretic models of
security games with multiple defenders.
Unlike most prior analysis, we specifically focus on the situations in
which each defender must protect multiple targets, so that
even a single defender's best response decision is, in general, highly
non-trivial.
We start with an analytical investigation of multi-defender security games with
independent targets, offering an equilibrium and price-of-anarchy analysis
of three models with increasing generality.
In all models, we find that the defenders have the incentive to over-protect the targets, at times significantly. Additionally, in the simpler models, we find that the price of anarchy is unbounded, linearly increasing both in the number of defenders and the number of targets per defender.
Considering interdependencies among targets, we develop a novel
mixed-integer linear programming formulation to compute a defender's
best response, and make use of this formulation in approximating Nash
equilibria of the game.
We apply this approach towards computational strategic analysis of
several 
models of networks representing interdependencies, including
real-world power networks.
Our analysis shows how network structure and the probability of failure
spread determine the propensity of defenders to over- or under-invest
in security.
\end{abstract}

\section{Introduction}
\label{sec:intro}
Security, physical and cyber, has come to the forefront of national
attention, particularly after 9/11.
Among the variety of approaches that are used to tackle security
problems, from risk analysis to red teaming, game theory has had a
significant impact, with tools based on game theoretic analysis having
been deployed in LAX
airport to schedule canine
patrols~\shortcite{Paruchuri08,Jain08:Bayesian,Pita09:Using}, by Federal
Air Marshall Service (FAMS) to schedule the air
marshals~\shortcite{Kiekintveld09:Computing,Jain10,Jain10a}, and by the US
Coast Guard to schedule boat patrols~\shortcite{Shieh12}.
All of these deployments, and numerous other related efforts, have
cast security as a Stackelberg game between a single defender and an attacker, in
which the defender leads (i.e., acts first), choosing a probability
distribution over defense actions, and the attacker, upon learning
this probability distribution, chooses a response~\shortcite{Conitzer06:Computing}.
In many cases, the attacker is modeled as a rational agent who selects
an optimal response and, in the many applications that compute a Strong
Stackelberg equilibrium, an attacker is often assumed to break ties in the
defender's favor~\shortcite{Paruchuri08,Korzhyk11}.

A crucial assumption that all these efforts have in common is that a single defender is responsible for all the targets that need protection, and that she has control over all of the security resources. However, there are many domains in which there are multiple defender agencies who are in charge of different subsets of all targets. In practice, numerous parties are responsible for security; indeed, the fact that the basic framework has been deployed by different entities and agencies makes this manifest already.
If security decisions made by different parties were entirely independent, both from the defender's and the attacker's perspective, a single-defender model would be entirely satisfactory. However, the assets protected by different entities are typically
interdependent, or, more generally, have value to others who are not involved in security decisions. Additionally, attackers, insofar as they may target different sectors under the charge of different defenders, are resource constrained, implicitly coupling otherwise independent targets.

An important motivating application for our multidefender security game is security and reliability in the power grid. Independent System Operators (ISOs) and profit-driven independent utility operators are largely responsible for operating and controlling subsystems of the entire grid ~\shortcite{RAP11a}. These operators are held responsible for the reliability of their system, and thus have independent, and possibly even competing, goals with neighboring ISOs. As such, their security decisions are made independently, despite the interdependencies present between subsystems. 
As a result of this organization, cascading failures in the power grid can present a great threat to the entire system, even when an explicit attacker is not present. This problem is exacerbated by the fact that components in the grid are controlled by multiple entities and are also dependent on other independently operated utility networks (water, communications, natural gas, etc.).

We extend the previous Stackelberg game models in two ways:
\begin{enumerate}
	\item an analytic equilibrium and price of anarchy (PoA) characterization of $3$ multidefender security scenarios, in which we assume \emph{homogeneous and independent valuations} of the targets for each defender; and
	\item a computational analysis leveraging a novel mixed-integer linear program (MIP) approach for computing a defender's best response, combined with a novel heuristic method for approximating equilibria in interdependent multi-defender security games with heterogeneous targets. 
\end{enumerate}
In case $1$ where there are multiple defenders, and the values of the targets are \emph{independent} and \emph{homogeneous} among the defenders, our analysis is focused on three models of such multi-defender games (each varying in their level of generality). We show that a Nash equilibrium among defenders in this two-stage game model need not always exist, even when the defenders utilize randomized strategies (i.e., probability distributions over target protection levels); this is distinct from a model in which the attacker moves simultaneously with the defenders, where a mixed strategy equilibrium is guaranteed to exist.
When an equilibrium does exist, we show that the defenders protect all of their targets with probability 1 in all three models, whereas the socially optimal protection levels are generally significantly lower.
When no equilibrium exists, we characterize the best approximate Nash equilibrium (that is, one in which defenders have the least gain from deviation), showing that over-investment is substantial in this case as well.
Our \emph{price of anarchy (PoA)} analysis, which relies on the unique equilibrium when it exists, and the approximate equilibrium otherwise, demonstrates a surprising finding: whereas PoA is unbounded in the simpler models, increasing linearly with the number of defenders, the more general model shows this to be an atypical special case achieved when several parameters are exactly zero.
More generally, PoA is bounded by a constant. 

For case $2$, we introduce interdependencies between targets. 
Because closed-form analysis in this setting is intractable, we propose a novel mixed-integer linear programming approach combined with a novel heuristic method to approximate equilibrium behavior. Unlike other multi-defender models (e.g.,~\shortcite{Kunreuther03,Chan12,Bachrach12,Acemoglu13}), our approach maintains the typical complexity of \emph{individual defender} decision process in the multi-defender framework, with each defender responsible for securing \emph{many}, possibly interdependent, targets. Our setup gives rise to two competing externalities of security decisions: a \emph{positive} externality, where greater
security implies reduced contagion risk to other defenders, and a \emph{negative} externality, which arises because high security by one player pushes the attacker to attack someone else's assets. We study the impact of competing externality effects of defense on the resulting Nash equilibrium
outcomes as a function of network topology (using both synthetic and
real networks), interdependent risk, and the level of system
decentralization. One of our key findings is that the impact of system
decentralization on security and welfare can be non-monotonic when
systems are highly interdependent: high
levels of decentralization can yield near-optimal outcomes, even as
moderate decentralization results in significant
underinvestment. With weak interdependencies, on the other hand, an
increasingly decentralized system tends more strongly to over-invest in security.

The remainder of our paper is outlined as follows. In Section $2$, we give an overview of related work. In Section $3$, we briefly outline the definitions and solution concepts of our independent and interdependent multi-defender security games, respectively. Section $4$ provides an equilibrium and PoA analysis of the homogeneous, independent security game models. Section $5$ further explains the interdependent multi-defender model, and presents results on well-studied synthetic networks, as well as on real-world power grid networks. 

\section{Related Work}
\label{sec:previouswork}

Our work, like much work in the recent security game literature,
builds on the notion of Stackelberg games~\cite{Osborne94}, which model commitment in strategic settings.
The first thorough computational treatment of randomized (mixed
strategy) commitment was due to~\citeA{Conitzer06:Computing}.
In this line of work, of greatest relevance to our effort are
multiple-leader Stackelberg
games~\shortcite{Sherali84,DeMiguel09,Leyffer07,Rodoplu10,Kulkarni14,Sinha14}.
In many cases, these approaches leverage specialized problem
structure, and are not immediately applicable to our setting.
In particular,~\citeA{Sherali84} and~\citeA{DeMiguel09} focus on relatively simple models with firms
setting production quantity (a single variable), aiming to maximize
profit.
Both show existence and uniqueness of equilibria in their setting, and
leverage these characterization results to obtain solutions to the
games.
Similarly,~\citeA{Rodoplu10} consider a relatively
simple model of network competition in which leaders are nodes setting
prices for packets transmitted through them; again, each leader only
sets a single variable, the utility functions are problem-specific,
and algorithms are specialized to the particular problem structure (and
are inapplicable to our setting). 

\citeA{Sinha14} propose an evolutionary algorithm for
solving bi-level Stackelberg problems, but their problem structure is
also highly specific to the domain of interest (firms choosing
production, investment, and marketing, and maximizing profit), and the
evolutionary algorithm leverages significant simplifications, such as
the assumption that the market eventually clears.
\citeA{Leyffer07} present a very generic
multi-leader multi-follower setting and solution framework in the
context of shared complementarity constraints (which is the case for
our problem, where a single follower attacks a single target), but rely on separability of objective
functions in leader and follower variables, the assumption that does
not obtain in our setting (in addition, their approach only scales to
2-4 leaders, whereas we are able to approximately solve games with 64 leaders).
\citeA{Kulkarni14} offer a deep theoretical
treatment of a relatively broad class of multi-leader multi-follower
games, but much of their analysis and positive results are restricted
to potential games, and they do not offer specific algorithmic suggestions.
Like us, they leverage shared constraints to resolve the issue of
incompatible leader assumptions about the follower tie-breaking behavior.

Our point of departure is a class of Stackelberg games specifically
pertinent to security: commonly, these are simply known as
\emph{security games}~\shortcite{Korzhyk11,Paruchuri08,Jain10a,Vorobeychik11}.
In these games, a single defender allocates a set of resources among
potential targets of attack in a randomized fashion (that is, the
defender commits to a probability distribution over resource-to-target
mappings), with an attacker choosing a single target
to attack after observing the defender's strategy.
Almost universally in this domain, a Strong Stackelberg equilibrium
(SSE) is a solution concept of choice.
In SSE, the follower (attacker) is assumed to break ties in the
defender's favor.
As we will see below, this solution concept presents conceptual and technical problems
in a multi-defender setting.

A similar, though mostly orthogonal, line of work are network
interdiction problems~\shortcite{Cormican98,Woodruff03}, in which a leader
attempts to interdict a network over which the follower subsequently
solves a variation of a network flow problem.
Unlike the literature on security games, as well as our setting, network interdiction problems
are almost universally zero-sum (minimax).

Another somewhat related line of work considers the problem of
coordination and teamwork among multiple defenders in a purely
cooperative setting~\cite{Jiang13,Shieh14}.
This work, however, is entirely unlike ours: in particular, our primary
focus is on the impact of \emph{incentive misalignment} among the
defenders with different (though certainly related) motivations,
rather than coordination issues and teamwork.
While often effective coordination among multiple defenders can be
achieved, just as often (if not predominantly) decentralization of
decision making processes and resources inherently give rise to
distinct, and often conflicting, incentives among defenders.

Among the earliest multi-defender models is the literature on \emph{interdependent
  security games}~\shortcite{Kunreuther03}, in which interactions among multiple defenders are
modeled as an $n$-player, $2$-action game, where a player decides
whether to invest in security; however, no attacker is considered. More recently, time-dependent scenarios where coordination of defender resources amongst multiple defenders is assumed have been studied using Markov decision processes~\shortcite{Shieh14a}. Since total cooperation is assumed, this model effectively reduces to a single defender game in which the defender controls all resources.
A recent extension, \emph{interdependent defense
  games}~\shortcite{Chan12}, does consider an attacker who acts \emph{simultaneously} with
the defenders, rather than after observing the joint defense
configuration, as in our model. Interdependent defense games have also been studied in the context of traffic infrastructure defense~\shortcite{Alderson11a}.
Two recent efforts studying multi-defender games explicitly model interdependence among
targets through a probabilistic contagion process~\shortcite{Bachrach12,Acemoglu13}.
Like our paper, they consider attackers who observe the
joint defense prior to making a decision, but each defender is
restricted to secure a single node, and strategy space is assumed to
be continuous.
\citeA{Vorobeychik11} is, to our knowledge, the only other attempt to
study strategic settings related to security in which each player's
decision space is combinatorial.
However, this work does not consider a strategic attacker.


\section{Multidefender Models}
\label{sec:models}
Our modeling effort proceeds in four steps, each generalizing the previous.
As we see below, each generalization step reveals new and surprising insights about the multi-defender security setting, allowing us to appreciate the fundamental incentive forces. The first three models deal with homogeneous, independent targets and will be analyzed exactly using Nash equilibrium and price of anarchy (PoA) analysis. Our final model introduces interdependencies, and will be analyzed using computational methods.

\subsection{The Baseline Model}

We start with a model which most reflects the related literature: in particular, this model involves $n$ defenders and a single attacker, with each defender engaged in protecting a single target.
Each target has the same value to the defender $v>0$. 
We suppose that the defender has two discrete choices: to protect the target, or not.
In addition, the defender can randomly choose among these; our focus is on these \emph{coverage probabilities} (i.e., the probability of protecting, or covering, the target), which we denote by $q_i$ for a given defender $i$.
The attacker is strategic, can observe the defenders' coverage
probabilities, and chooses a target that maximizes the damage. We assume that attacker is indifferent among the targets, and attacks the target with the lowest coverage probability, breaking ties uniformly at random. 
In a given scenario, for all defenders, the attacker's strategy is a vector of probabilities $P=<p_1, p_2, ..., p_n>$, where $p_i$ is the probability of attacking the target protected by defender $i$, with $\sum_{i=1}^{n} p_i=1$. 

We assume that if the attacker chooses to attack a target
corresponding to defender $i$ and defender $i$ chooses to protect the
target, then the utility of both is $0$, and if the attacker attacks
the target but it is not protected, then the utility of the defender
is $-v$ while the attacker's utility is $v$. If a defender chooses to cover a target, it will incur a cost $c>0$. 
Additionally, we assume that targets are independent, i.e., if
defender $i$ is successfully attacked, all other defenders $j \ne i$
receive 0 utility.
We can thus define the expected utility of a defender $i$ as $$u_i=p_iu_i^{a}+(1-p_i)u_i^{u},$$ where
$u_i^{a}$ is the utility of $i$ if it is attacked, and $u_i^{u}$ is the utility of $i$ if it is not attacked.
By the assumptions above,
  $$ u_i^{a}=-(1-q_i)v-q_ic =-v+q_i(v-c) $$ 
 $$u_i^{u}=-q_ic.$$

\subsection{The Multi-Target Model}

Our key conceptual departure from related work is in allowing each defender to protect multiple targets, aligning it better with practical security domains.
Specifically, suppose that there are $n$ defenders, each protecting $k \ge 1$ targets. 
Then the strategy of defender $i$ will be a vector $<q_{i1}, q_{i2}, ... q_{ik}>$. 
The strategy profile of the attacker can then be described as a matrix of probabilities,
\[
\begin{pmatrix}
p_{11}& p_{12}&\ldots &p_{1k}\\
p_{21}& p_{22}& \ldots& p_{2k}\\
 \vdots&\vdots &\ddots&\vdots\\
 p_{n1}& p_{n2}& \ldots & p_{nk}\\
\end{pmatrix}
\]
in which $\sum_{i=1}^{n}\sum_{j=1}^{k} p_{ij}=1$ and $p_{ij}\geq 0$ for each $i$ and $j$. 
The expected utility of a defender $i$ in this model is  $$u_i=\sum_{j=1}^{k} p_{ij}u_{ij}^{a}+(1-p_{ij})u_{ij}^{u},$$ 
where $u_{ij}^{a}$ is the utility of target $j$ to defender $i$ if it is attacked, and $u_{ij}^{u}$ is the utility of target $j$ to $i$ if it is not attacked.
Using the notation introduced earlier, we have
  $$ u_{ij}^{a}=-(1-q_{ij})v-q_{ij}c =-v+q_{ij}(v-c) $$ 
 $$u_{ij}^{u}=-q_{ij}c.$$

\subsection{The General Model}

Generalizing further, we assume that if the attacker chooses to attack a target protected by defender $i$ and the defender chooses to protect the target, then the utility of the target to defender $i$ is $U^c$, and if the attacker attacks the target but it is not protected, then the utility of the target to the defender is
$U^u$. It is reasonable to assume that $U^c\geq U^u$. If the target of defender $i$ is not attacked, then we assume that the utility of the target for defender $i$ is $\Omega\geq U^c$. Other assumptions are the same as those in the multi-target model.  
In the general model, therefore,
   $$ u_{ij}^{a}=q_{ij}U^c+(1-q_{ij})U^u-q_{ij}c=(U^c-U^u-c)q_{ij}+U^u $$ 
 $$u_{ij}^{u}=\Omega-q_{ij}c.$$

\subsection{The Interdependent Model}
The previous three models featured three important restrictions: first,
that target values are homogeneous, second, that targets are
independent, and third, that defenders protect the same number of targets.
We now relax these restrictions.
Suppose that a defender $i$ can choose from a finite set $O$ of
security configurations for each target $j \in T_i$, where $T_i$ is
the set of targets under $i$'s direct influence.
Let $T$ be the set of \emph{all} targets, that is, $T = \cup_i T_i$,
with $|T| = m$.
A configuration $o \in O$ for target $j \in T_i$ incurs a cost $c_{j}^o$
to the defender $i$.  
If the attacker attacks a target $j \in T$ while
configuration $o$ is in place, the expected value to a defender $i$ is
denoted by $U_{i,j}^o$, while the attacker's value is $V_{j}^o$.
We assume in this model that each player's utility depends only on the
target attacked and its security
configuration~\shortcite{Kiekintveld09:Computing,Letchford12}.
We denote by $q_{i,j}^o$ the probability that the defender $i$ chooses
$o$ at target $j \in T_i$.

While the problem we study assumes that that the utility of any player
for a given target depends only on its security configuration $o$,
there is a rather natural way to model interdependencies while
retaining this structure, proposed by~\citeA{Letchford12}.
Specifically, suppose that dependencies between targets are represented by a graph $(T,E)$,
with $T$ the set of targets (nodes) as above, and $E$ the set of edges
($j,j'$), where an edge from $j$ to $j'$ means that 
a successful attack on $j$ may have impact on $j'$.  Each target $j$ has
associated with it a value, $v_{ij}$, for the defender $i$, which is
the loss to $i$ if $j$ is affected (e.g., compromised, broken).  
The security configuration determines the probability $z_{j}^o(j)$ that
target $j$ is affected if the attacker attacks it
\emph{directly} and the defense configuration is $o$.  We model the
interdependencies between the nodes as independent cascade contagion~\shortcite{Kempe03,Letchford12}.
The contagion proceeds starting at an attacked node $j$, affecting its network
neighbors $j'$ each with probability $r_{j,j'}$, the contagion then
spreads from the newly affected nodes $j'$ to their neighbors, and so on.  The contagion can
only occur one time along any network edge, and once a node is affected it stays affected
through the diffusion process.  Each player's valuation for each target is then updated based on the probability of a failure cascading to one of the player's owned targets.

\subsection{The Weakness of Strong Stackelberg Equilibrium}
By far the most important solution concept in Stackelberg security
games is a \emph{Strong Stackelberg equilibrium (SSE)}.
A SSE is characterized by an assumption that the attacker breaks ties
in defender's favor.
When there is a single defender, this is well defined, and
quite reasonable when the defender can commit to a mixed strategy: a
slight adjustment in the defense policy will force the attacker to
strictly prefer the desired option, with little loss to the defender.
As we now illustrate, however, SSE is fundamentally problematic in a
multi-defender context, because the notion of ``breaking ties in
defender's favor'' is no longer well defined in general, as we must
specify \emph{which} defender will receive the favor.

To see concretely what goes wrong, consider the example in
Figure~\ref{F:SSE_example}. In this example there are two defenders,
\begin{figure}
\centering
\includegraphics[width=1.5in]{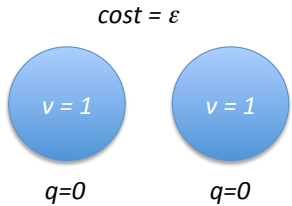}
\caption{Example of a problem with a SSE in a multi-defender setting.}
\label{F:SSE_example}
\end{figure}
one who defends the target
on the left, while the other defends the target on the right.
Both defenders value their respective targets at $1$, and have no
value for the counterpart's target.
The cost of defending each target is $0 < \delta \ll 1$.
Now, consider a strategy profile in which $q_j = 0$ for both targets
$t$, and let us focus on the best response of the first (left)
defender.
If this defender attempts to compute an SSE by fixing the strategy of
the second player, he perceives his utility under the current strategy
profile to be $1$, since he would assume that the attacker breaks ties in his
favor and, thus, attacks the defender on the right.
By the same logic, the defender on the right will assume that the
attacker will attack his counterpart, and perceive $q_j = 0$ to be the
best response.
Since the attacker actually attacks one of them, the best response of
the defender being attacked is to defend with a small probability, pushing
the attacker towards the other target.
What goes wrong here is that both players assume that the
attacker attacks the other (breaks ties in their favor), which is
inconsistent with the assumption that the attacker will certainly
attack \emph{some} target.\footnote{The problem we observe is similar to the issue of inconsistent
conjectures the leaders in a multi-leader Stackelberg game could have
about follower behavior noted by Kulkarni and
Shanbhag~\cite{Kulkarni14}. The solution we propose below---ASE---has the
effect of imposing a shared constraint, the idea introduced by Kulkarni and
Shanbhag generically. We note, however, that our concern here is not
merely the fact that inconsistent conjectures lead to disequilibrium;
rather, they could lead to nonsensical equilibria!}


\subsection{Solution Concepts}

Since the classic (two-player) SSE solution concept used in
Stackelberg security games does not conceptually extend to be an individual
defender best-response problem in the multi-defender setting, we need to consider an alternative.
One option is to compute an arbitrary subgame perfect equilibrium.
However, we wish to impose a natural constraint on the solution
concept that the attacker's best response be computed consistently for
any joint defense policy, just as it is in a SSE (in other words, we wish to fix a
tie-breaking rule).
One natural tie-breaking rule is that the attacker chooses a target
uniformly at random from the set of all best responses.
We call the corresponding solution concept (which is a refinement of
the subgame perfect equilibrium of our game) the \emph{Average-case
  Stackelberg Equilibrium (ASE)}.
The crucial property of this solution concept that we desire is that
the attacker's behavior presumed by a defender's best response
problem is independent of that defender's identity, a property that
SSE violates. 
As we demonstrate below, ASE is not guaranteed to exist, in which case we focus on $\epsilon$-equilibria, in which no defender gains more than $\epsilon$ by deviating; in particular, we will consider $\epsilon$-equilibria with the smallest attainable $\epsilon$. 

To measure how the efficiency of the game degrades due to selfish behavior of the defenders, we consider \emph{Utilitarian Social Welfare} and \emph{$(\epsilon)$-Price of Anarchy} in our paper. \emph{Utilitarian Social Welfare} is the sum of all defenders' payoffs. For the smallest attainable $\epsilon$, we define $\epsilon$-Price of Anarchy ($\epsilon$-PoA) as follows:
\begin{equation}
\epsilon\text{-}PoA= \frac{SW_O}{SW_E} \nonumber
\end{equation}
where $SW_O$ is the optimal (utilitarian) social welfare that can be
obtained (i.e., if there was a single defender), and $SW_E$ is the
worst-case (utilitarian) social welfare in $\epsilon$-equilibrium. An
underlying assumption of this definition is that the value of $SW_O$
and $SW_E$ are both positive. If they are both negative, then
$\epsilon$-PoA will be the reciprocal of above equation. Note that the
ordinary \emph{Price of Anarchy} is a special case of $\epsilon$-Price
of Anarchy with $\epsilon=0$. 

\section{Equilibrium Analysis of Independent Multi-Defender Security Games}
\label{sec:indep-analysis}

In this section, we consider scenarios in which the values of the targets are \emph{independent} and \emph{homogeneous} among the defenders. Our equilibrium and Price of Anarchy analysis will show that 
a Nash equilibrium among defenders in the two-stage game model
(equivalently, ASE) need not always exist, even when the defenders
utilize randomized strategies (i.e., probability distributions over
target protection levels). For cases when there is no Nash
equilibrium, we make use of approximate Nash (ASE) equilibrium and the
associated ($\epsilon$)-Price of Anarchy.





\subsection{The Baseline Model}

Our first result presents necessary and sufficient conditions for the
existence of a Nash equilibrium among defenders in the baseline model, and characterizes it when it does exist.

\begin{Theorem}
In the \emph{Baseline model}, Nash equilibrium exists \emph{if and only if} $v\geq c$. In this equilibrium all targets are protected
with probability 1.
\end{Theorem}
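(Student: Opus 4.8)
The plan is to reduce the two-stage game to a simultaneous game among the $n$ defenders, in which the attacker's response is already pinned down by the tie-breaking rule, and then to argue by elimination that the all-covered profile is the \emph{only} possible equilibrium before checking exactly when it survives. Fixing the others' coverages, let $q^{-i} = \min_{j \neq i} q_j$. Substituting the induced attack probabilities into $u_i = p_i u_i^a + (1-p_i)u_i^u$ yields the compact form $u_i = -p_i v(1-q_i) - q_i c$, which splits defender $i$'s payoff into three regimes: if $q_i < q^{-i}$ he is attacked with certainty ($p_i = 1$) and earns $-v + q_i(v-c)$; if $q_i > q^{-i}$ he is not attacked ($p_i = 0$) and earns $-q_i c$; and if $q_i = q^{-i}$ he shares the attack uniformly with the other minimizers.

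First I would eliminate every profile in which some defender's coverage strictly exceeds the running minimum. Such a defender is not attacked, so his payoff $-q_i c$ strictly increases as he shaves $q_i$ down toward (but not to) the minimum of the others; this is a strictly profitable deviation, so no equilibrium can have a defender strictly above the minimum. Hence every candidate equilibrium is symmetric, with all defenders at a common level $\bar q$. Next I would rule out $\bar q < 1$: in a symmetric profile each defender is attacked with probability $1/n$ and bears the expected loss $\tfrac{v}{n}(1-\bar q) > 0$, which he can escape entirely by nudging his coverage just above $\bar q$ at arbitrarily small extra cost. Because $v > 0$, this overcutting deviation is strictly profitable whenever $\bar q < 1$, leaving $\bar q = 1$ as the unique candidate, which already establishes the ``protected with probability 1'' claim.

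It then remains to test the all-ones profile. There every defender is attacked with probability $1/n$ but suffers no loss (since $1 - q_i = 0$), so $u_i = -c$; upward deviations are impossible, and the only available deviations are downward to some $q_i < 1$, which makes $i$ the unique minimizer and yields $-v + q_i(v-c)$. The slope of this payoff in $q_i$ is $v - c$, so a downward deviation is (weakly) unprofitable exactly when $v \geq c$ and strictly profitable (with the best deviation $q_i = 0$) when $v < c$. Combined with candidate uniqueness, this gives both directions at once: the all-ones profile is a Nash equilibrium iff $v \geq c$, and since it is the only candidate, no equilibrium exists when $v < c$.

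The main obstacle, and the feature that makes the characterization delicate, is the \emph{non-attainment} of best responses in the off-diagonal regimes: a defender strictly above the minimum wants to inch down toward it, while a symmetric defender with $\bar q < 1$ wants to inch up past it, but in each case the improving deviations approach an open boundary without reaching an optimum. This open-set behavior is precisely what collapses the equilibrium set to the single boundary profile $\bar q = 1$ and what destroys existence once $v < c$; the arithmetic within each regime is otherwise routine, so the care needed lies in handling these strict-inequality undercut/overcut arguments rather than in any computation.
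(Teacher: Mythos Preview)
Your proposal is correct and follows essentially the same three-step route as the paper: reduce to symmetric profiles by noting that any defender strictly above the minimum can profitably lower coverage, rule out $\bar q<1$ via the ``nudge just above'' deviation, and then test $\bar q=1$ by comparing $-c$ against the deviation payoff $-v+q_i(v-c)$. The only cosmetic difference is that you organize the argument around the three regimes determined by $q^{-i}$ and add explicit commentary on the open-boundary non-attainment phenomenon, whereas the paper works directly with the symmetric utility formula; the logic is identical.
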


\begin{proof}
Firstly, we claim that Nash equilibrium among defenders can appear \emph{only if} all targets have the same coverage probability $q$.
Otherwise, some defender $j$ who has probability 0 of being attacked has the incentive to decrease her $q_j$. To find the Nash equilibria, we need only consider strategy profiles in which all targets have the same coverage probability. 

When all defenders use the same coverage probability $q$, each
defender's expected utility is $$u=\frac{(v-cn)q-v}{n}.$$
If $q<1$, some defender $i$ could increase $q$ to $q+\delta$, where
$\delta$ is a small positive real number, to avoid being attacked, and
receive utility $u'=-(q+\delta)c$, so that $$u'-u=\frac{v(1-q)-nc\delta}{n}.$$
As $\delta$ can be arbitrarily small, $u'-u>0$ when $q<1$.
Consequently, when $q<1$, a defender always has an incentive to
deviate, which implies that the only possible Nash equilibrium can be
for all players to play $q=1$.

When all defenders use coverage probability $q=1$, each earns an expected utility of 
$$u=-c.$$ 
If a defender $i$ decreases her coverage probability to $q'<1$, then
she will be attacked with probability 1, and receive expected utility
$u'=-v+q'(v-c) $, so that $$u'-u=(v-c)(q'-1).$$
If $v\geq c$, then $u'-u\leq 0$, and $q=1$ is indeed a Nash equilibrium.
If $v<c$, however, $u'-u> 0$, which implies that a Nash equilibrium
does not exist.
\end{proof}

Thus, if a Nash equilibrium does exist, it is unique, with all defenders always protecting their targets.
But what if the equilibrium does not exist?
Next, we characterize the (unique) $\epsilon$-equilibrium with the minimal $\epsilon$ that arises in such a case.
We will use this approximate equilibrium strategy profile as a \emph{prediction} of the defenders' strategies.



\begin{Theorem}
In the \emph{Baseline model}, if $v<c$, the optimal $\epsilon$-equilibrium is for all defenders to cover their target with probability $\frac{v}{c}$. The corresponding $\epsilon$ is $\frac{v(c-v)}{cn}$.
\end{Theorem}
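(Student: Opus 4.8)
The plan is to first pin down the best $\epsilon$ among \emph{symmetric} profiles (all defenders playing a common $q$) and then argue that no asymmetric profile can do better. For the symmetric part I reuse the utility $u=\frac{(v-cn)q-v}{n}$ computed in the proof of Theorem~1. A defender facing the common profile $q$ has exactly two candidate deviations: (i)~raise coverage infinitesimally above $q$, becoming safe from attack, for limiting payoff $-qc$; and (ii)~since $v<c$ makes $u^{a}_{i}$ strictly decreasing in own coverage, drop coverage all the way to $0$, becoming the sure target but paying no protection cost, for payoff $-v$. Working out the resulting gains gives $g_{\uparrow}(q)=\frac{v(1-q)}{n}$ and $g_{\downarrow}(q)=\frac{(nc-v)q-(n-1)v}{n}$, and the smallest $\epsilon$ the profile tolerates is $\max\{g_{\uparrow}(q),g_{\downarrow}(q)\}$.

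Since $g_{\uparrow}$ is decreasing and $g_{\downarrow}$ is increasing in $q$ (using $v<c\le nc$), this maximum is minimized where the two coincide. Solving $g_{\uparrow}(q)=g_{\downarrow}(q)$ collapses to $nc\,q=nv$, i.e. $q=v/c$, at which point both gains equal $\frac{v(c-v)}{cn}$. Hence the symmetric profile $q=v/c$ is an $\epsilon$-equilibrium with $\epsilon=\frac{v(c-v)}{cn}$, and no symmetric profile admits a smaller $\epsilon$.

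It remains to rule out asymmetric profiles. Given any profile, let $q_{\min}$ be the smallest coverage and let $S$ (with $|S|=s$) be the set of targets attaining it---precisely the targets the attacker randomizes over. I focus on a defender $i\in S$ and repeat the two-deviation computation with $n$ replaced by $s$: when $s\ge 2$ this gives $\max\{g^{S}_{\uparrow},g^{S}_{\downarrow}\}\ge\frac{v(c-v)}{sc}\ge\frac{v(c-v)}{cn}$, the last step because $s\le n$. The delicate case is $s=1$ (a unique attacked defender with free-riders strictly above him): here I combine $i$'s downward deviation (gain $(c-v)q_{\min}$) and \emph{upward escape} (gain $v-q_{(2)}c+(c-v)q_{\min}$, where $q_{(2)}$ is the second-lowest coverage) with a free-rider's downward slide toward $q_{\min}$ (gain $(q_{(2)}-q_{\min})c$). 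The last two gains sum to $v(1-q_{\min})$, so their maximum is at least $\frac{v(1-q_{\min})}{2}$; balancing this against $(c-v)q_{\min}$ shows the worst deviation is at least $\frac{v(c-v)}{2c-v}\ge\frac{v(c-v)}{cn}$ whenever $n\ge 2$. Thus every profile tolerates $\epsilon\ge\frac{v(c-v)}{cn}$, with equality only at the symmetric profile $q=v/c$, which is therefore the unique optimal $\epsilon$-equilibrium.

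I expect the $s=1$ case to be the main obstacle. Unlike the symmetric analysis, the attacked defender's cheapest safe deviation depends on the \emph{second}-lowest coverage rather than on $q_{\min}$, so his escape incentive and the free-riders' cost-cutting incentives must be tracked simultaneously rather than in isolation. The crucial structural fact that makes the argument go through is the identity that these two gains sum to $v(1-q_{\min})$, which prevents both from being small at once and thereby forces the worst-case deviation above the symmetric optimum.
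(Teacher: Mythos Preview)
Your approach is essentially the same as the paper's: analyze symmetric profiles first, locate the optimal $\epsilon$ at the crossing of the upward and downward deviation gains, and then rule out asymmetric profiles by showing some defender at the minimum coverage can gain more than $\frac{v(c-v)}{cn}$. Your $g_{\uparrow}$ and $g_{\downarrow}$ are exactly the paper's $d_1$ and $d_2$ (the paper just writes the downward gain for general $\delta_2$ before maximizing at $\delta_2=q$), and your $s\ge 2$ asymmetric case is the paper's $\alpha\ge 2$ case verbatim with $n$ replaced by $\alpha$.

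Where you genuinely add something is the $s=1$ case. The paper treats $\alpha=1$ and $\alpha\ge 2$ uniformly: for $q'\le v/c$ it asserts that a defender at the minimum can raise her coverage by an \emph{arbitrarily small} $\delta_3$ and obtain utility $-(q'+\delta_3)c$. But when $\alpha=1$ this is unjustified---an infinitesimal increase leaves her strictly lowest and still attacked with probability $1$; she must clear $q_{(2)}$ to become safe, so $\delta_3$ cannot be taken to $0$. Your three-deviation argument (drop to $0$; escape past $q_{(2)}$; a free-rider at $q_{(2)}$ slides toward $q_{\min}$), together with the identity that the last two gains sum to $v(1-q_{\min})$, correctly handles exactly this gap and yields the strict bound $\epsilon\ge \frac{v(c-v)}{2c-v}>\frac{v(c-v)}{cn}$ for $n\ge 2$. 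One small tightening: in your $s\ge 2$ sentence you write $s\le n$, but for an asymmetric profile you actually have $s<n$, which is what gives the \emph{strict} inequality needed for the uniqueness claim at the end.
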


\begin{proof}
We firstly consider strategy profiles in which all targets have the
same probability $q$ of being protected. Then each defender's expected
utility is
$$u=\frac{(v-cn)q-v}{n}.$$

Suppose $0\leq q<1$. If a defender $i$ slightly increases $q$ to
$q+\delta_1$, she could receive a utility $u'=-(q+\delta_1)c$, with
$$u'-u=\frac{v(1-q)-nc\delta_1}{n}<\frac{v(1-q)}{n}.$$

Suppose $0<q\leq 1$. If a defender $i$ slightly decreases $q$ to
$q-\delta_2$, she could receive utility $u''=-v+(q-\delta_2)(v-c)$, with
$$u''-u=\frac{v(1-q)(1-n)+\delta_2 n(c-v)}{n}.$$
Since $\delta_2\leq q$, 
$$u''-u\leq \frac{v(1-q)(1-n)+qn(c-v)}{n}=\frac{v(1-q)}{n}+(qc-v).$$
Let $d_1=\frac{v(1-q)}{n}$,
$d_2=\frac{v(1-q)}{n}+(qc-v)$. If $q=0$, a defender could
deviate to increase utility by at most $\frac{v}{n}$. If $q=1$, a defender could
deviate to increase utility by at most $(c-v)$.
When $0<q\leq \frac{v}{c}$, we have that $d_2\leq d_1$, and
it is $d_1$-equilibrium. When $\frac{v}{c}<q<1$, we have that
$d_2>d_1$, and it is $d_2$-equilibrium. 

Putting everything together, there is an $\epsilon$-equilibrium with
\begin{equation}
\epsilon=
\begin{cases}
\frac{v(1-q)}{n}, &\text{if $0\leq q\leq \frac{v}{c}$;}\\
\frac{v(1-q)}{n}+(qc-v), &\text{if $\frac{v}{c}<q\leq 1$.}
\end{cases} \nonumber
\end{equation}
Moreover, $q=\frac{v}{c}$ gives us the unique minimal
$\epsilon=\frac{v(c-v)}{cn}$ when all defenders use the same coverage probabilities.

We now claim that the $\frac{v(c-v)}{cn}$-equilibrium \emph{could
  only} exist when all defenders play an identical coverage probability.
Suppose defenders use different coverage probabilities. Then there are
$\alpha$ defenders for $1\leq \alpha<n$ who have the same minimal probability $q'$ of protecting their targets. The expected utility for each defender among these $\alpha$ defenders is: 
$$u_e=\frac{(v-c\alpha)q'-v}{\alpha}.$$
When $\frac{v}{c}<q'\leq 1$, some defender $i$ among these $\alpha$
defenders could decrease her probability to $0$ to get a utility of $u_1=-v$, with
$$u_1-u_e=\frac{v(1-q')}{\alpha}+(q'c-v)>\frac{v(1-q')}{n}+(q'c-v) > \frac{v(c-v)}{cn}.$$
 When $0\leq q'\leq \frac{v}{c}$, some defender $i$ among these $\alpha$ defenders could increase her probability to $q'+\delta_3$ to get utility 
 $u_2=-(q'+\delta_3)c$ with
 $$u_2-u_e=\frac{v(1-q')-\alpha c\delta_3}{\alpha}>\frac{v(1-q')}{n}
 \ge \frac{v(c-v)}{cn},$$
where the first inequality follows because $\delta_3$ can be made
arbitrarily small.
\end{proof}

Armed with a complete characterization of predictions of strategic behavior among the defenders, we can now consider how this behavior related to socially optimal protection decisions.
Since the solutions are unique, there is no distinction between the notions of \emph{price of anarchy} and \emph{price of stability}; we term the ratio of socially optimal welfare to welfare in equilibrium as the price of anarchy for convenience.


First, we characterize the socially optimal outcome.
\begin{Theorem}
In the \emph{Baseline model}, the optimal social welfare $SW_O$ is 
\begin{equation}
SW_O=
\begin{cases}
-cn, &\text{if $v\geq cn$;}\\
-v, &\text{if $v<cn$.}
\end{cases} \nonumber
\end{equation}
\end{Theorem}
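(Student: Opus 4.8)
The plan is to directly maximize the utilitarian social welfare $SW=\sum_i u_i$ over all coverage-probability profiles $(q_1,\dots,q_n)\in[0,1]^n$, treating the attacker exactly as in the equilibrium analysis: a constrained adversary who attacks a minimum-coverage target, breaking ties uniformly. First I would substitute the given $u_i^{a}=-v+q_i(v-c)$ and $u_i^{u}=-q_i c$ into $u_i=p_i u_i^{a}+(1-p_i)u_i^{u}$ and simplify. A short cancellation collapses the per-defender payoff to $u_i=-p_i v(1-q_i)-q_i c$, hence $SW=-v\sum_i p_i(1-q_i)-c\sum_i q_i$.

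Next I would insert the attacker's response. Writing $\mu=\min_i q_i$ and letting $S$ be the set of minimizing targets with $|S|=s$, we have $p_i=1/s$ for $i\in S$ and $p_i=0$ otherwise. Since every attacked target has coverage exactly $\mu$, the mixed strategy drops out cleanly: $\sum_i p_i(1-q_i)=1-\mu$, giving $SW=-v(1-\mu)-c\sum_i q_i$, a quantity that depends on the profile only through $\mu$ and $\sum_i q_i$.

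The key structural step is to show that an optimal planner sets all coverage probabilities equal. For any fixed minimum $\mu$, the term $-v(1-\mu)$ is determined, while $-c\sum_i q_i$ is largest when $\sum_i q_i$ is as small as the constraint $q_i\ge\mu$ allows, i.e. $q_i=\mu$ for all $i$. This reduces the problem to one variable $\mu\in[0,1]$ with $SW(\mu)=-v+(v-cn)\mu$, which is affine in $\mu$. Optimizing this linear function over $[0,1]$ yields the two cases: when $v\ge cn$ the slope $v-cn$ is nonnegative, so $\mu=1$ is optimal and $SW_O=-cn$; when $v<cn$ the slope is negative, so $\mu=0$ is optimal and $SW_O=-v$, matching the claim.

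The calculations are routine; the one point needing care is the reduction to a uniform profile, since one must rule out the planner deliberately leaving some targets at low coverage to exploit the attacker's tie-breaking. The identity $\sum_i p_i(1-q_i)=1-\mu$ is exactly what settles this: the attacked-target contribution is governed solely by $\mu$, so raising any non-minimizing $q_i$ only adds cost without changing that term. This pins the optimum to the equal-coverage profile and validates the final single-variable optimization.
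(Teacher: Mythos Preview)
Your proof is correct and follows essentially the same route as the paper's proof sketch: reduce to the symmetric profile $q_i\equiv q$ and then optimize the affine function $SW(q)=(v-cn)q-v$ over $[0,1]$. Your reduction step is in fact more explicit than the paper's, which simply asserts that any non-minimizing $q_j$ can be lowered to improve welfare; your identity $\sum_i p_i(1-q_i)=1-\mu$ makes this transparent by showing the attack term depends only on $\mu$, so the only effect of non-minimal coverage is wasted cost.
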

\begin{proof}[Proof sketch]
We firstly claim that we could get optimal social welfare \emph{only
  if} all defenders use the same coverage probability $q$. If their
coverage probabilities are different, and some defender $j$ has
probability $0$ of being attacked, we could decrease $q_j$ to improve
social welfare. Therefore we need only to consider identical coverage
probabilities $q$ in determining optimal social welfare. 
Welfare, as a function of symmetric coverage $q$ is
$$SW(q)=-v+q(v-c)+(n-1)(-c)=(v-cn)q-v.$$ 
When $v \ge cn$, $q=1$ is optimal, whereas $q=0$ is optimal otherwise,
giving the desired result.
\end{proof}
From this result, it is already clear that defenders systematically over-invest in security, except when values of the targets are quite high.
This stems from the fact that the attacker creates a \emph{negative externality} of protection: if a defender protects his target with higher probability than others, the attacker will have an incentive to attack another defender.
In such a case, we can expect a ``dynamic'' adjustment process with defenders increasing their security investment well beyond what is socially optimal.
To see just how much the defenders lose in the process, we now characterize the price of anarchy of our game.

If $v\geq c$, 
social welfare in the unique equilibrium with $q=1$ is $$SW_E=-cn.$$
The associated PoA is then
\begin{equation}
PoA=
\begin{cases}
1, &\text{if $v\geq cn$;}\\
\frac{nc}{v}, &\text{if $c<v<cn$.}
\end{cases} \nonumber
\end{equation}


Figure \ref{fig1} shows the relationship among PoA, the number of
defenders, and the ratio $\frac{c}{v}$. 
From the figure we can see that when number of defenders and $\frac{c}{v}$ are small (e.g. $n\leq$ 5 and $\frac{c}{v}=0.2$), the price of anarchy is close to $1$. Otherwise, PoA is unbounded, growing linearly with $n$.

\begin{figure}
\begin{center}
\includegraphics[width=120mm,height=67.5mm]{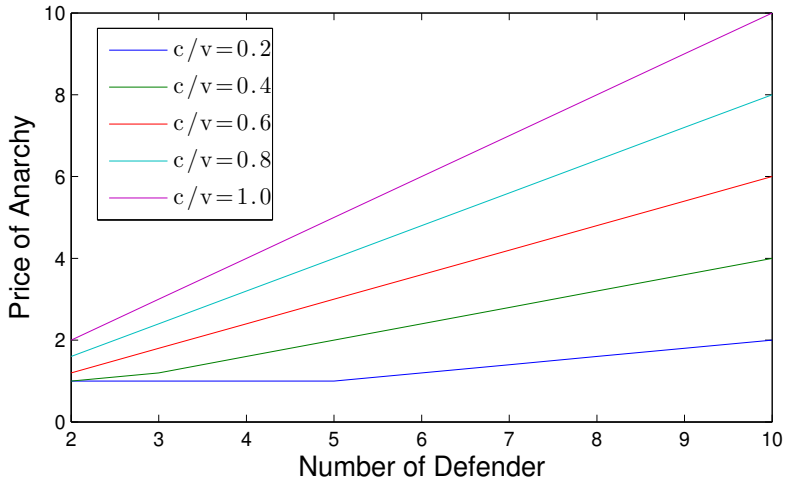}
\caption{Price of Anarchy when $v\geq c$}\label{fig1}
\end{center}
\end{figure}

When $v<c$, there is no Nash equilibrium. However, the optimal $\epsilon$-equilibrium features all defenders with the same coverage probability $\frac{v}{c}$ for their targets.  The corresponding Social Welfare is 
$$SW_E=(v-cn)\frac{v}{c}-v,$$ 
and the associated 
$\frac{v(c-v)}{cn}$-PoA
is
$\frac{cn+c-v}{c}$,
which is, again, linear in $n$.


\subsection{The Multi-Target Model}

Armed with observations from the model with a single target for each
defender, we now extend the model to a case not as yet considered in
the literature in a theoretical light: each defender protects a set of
$k$ targets.
This gives rise to a combinatorial set of possible decisions for each defender, so that even computing a best response is not necessarily easy.
Remarkably, we are able to characterize equilibria and approximate
equilibria in this setting as well. The proofs for this subsection are
in the appendix.

Our first result is almost a mirror-image of the corresponding result
in the baseline model: when a Nash equilibrium exists, all defenders
protect all of their targets with probability 1.
\begin{restatable}{Theorem}{EquMultiDefender}
\label{thm:Equ_Multi_Defender}
In the \emph{Multi-Target model}, Nash equilibrium exists \emph{if and
  only if} $v\geq kc$. In this equilibrium all targets are protected
with probability 1.
\end{restatable}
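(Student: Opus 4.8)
The plan is to mirror the structure of the baseline-model proof, generalizing each step to account for the $k$ targets per defender. First I would establish that any Nash equilibrium must be \emph{symmetric}, in the sense that all $nk$ targets are covered with a common probability $q$. The argument is as before: since the attacker only attacks targets achieving the global minimum coverage, any target strictly above the minimum is attacked with probability zero, and its owner could strictly reduce her cost by lowering that target's coverage toward the minimum without exposing it to attack. Hence in equilibrium all targets share a common coverage $q$, the attacker attacks uniformly over all $nk$ targets, and a routine computation (which specializes to the baseline formula when $k=1$) gives each defender's utility as $u=\frac{(v-nkc)q-v}{n}$.

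Next I would rule out symmetric profiles with $q<1$. If every target is covered with probability $q<1$, a single defender can raise all of her targets to $q+\delta$, removing them from the attacker's minimum-coverage target set and thereby guaranteeing none of her targets is attacked. Her utility becomes $-k(q+\delta)c$, and comparison with $u$ yields a gain of $\frac{v(1-q)}{n}-k\delta c$, which is positive for $\delta$ sufficiently small. Thus no symmetric profile with $q<1$ is stable, leaving $q=1$ as the only equilibrium candidate.

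Finally I would check stability of the all-covered profile $q=1$, which yields each defender utility $-kc$; here only downward deviations are available, and the key step is to identify the most profitable one. Fix any deviation by defender $i$, let $q'<1$ be the minimum coverage she assigns, and note that the attacker attacks only her targets at this minimum. I would argue that, for a fixed minimum $q'$, assigning \emph{all} $k$ targets the value $q'$ dominates any configuration leaving some target strictly above $q'$: such above-minimum targets are never attacked, so they only incur the strictly higher cost $q_{ij}c>q'c$ with no protective benefit. For the resulting symmetric deviation the attacker spreads uniformly over all $k$ of defender $i$'s targets, and a direct computation gives a utility change of $(q'-1)(v-kc)$. This is nonpositive for every $q'<1$ exactly when $v\ge kc$, and strictly positive (for small $q'$) when $v<kc$. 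Hence $q=1$ is an equilibrium if and only if $v\ge kc$; together with the previous steps this yields both the existence characterization and the conclusion that all targets are covered with probability $1$.

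I expect the main obstacle to be pinning down the optimal deviation in this last step. The naive deviation---lowering a single target---only produces the threshold $v\ge c$, as in the baseline; the binding deviation instead lowers all $k$ targets at once, diluting the attacker's response across one's own targets and yielding the sharper threshold $v\ge kc$. Establishing that this symmetric, full-set deviation dominates all heterogeneous alternatives, rather than merely exhibiting one profitable deviation, is the crux of the argument.
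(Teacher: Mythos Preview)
Your proposal is correct and follows essentially the same three-step structure as the paper's proof: reduce to symmetric profiles, rule out $q<1$ via the $\delta$-increment deviation, and analyze downward deviations from $q=1$. The only cosmetic difference is that you first argue the uniform deviation $q_{ij}\equiv q'$ dominates all heterogeneous ones and then compute $(q'-1)(v-kc)$ directly, whereas the paper writes the general deviation utility, case-splits on $v<c$ versus $v\ge c$, and bounds via $q_{ij}\ge q_{ir}$ to reach the same expression; your route is slightly cleaner but not substantively different.
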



Next, we consider scenarios when $v<kc$, in which there is no Nash equilibrium. 
Our next result characterizes optimal (lowest-$\epsilon$) approximate Nash equilibria.
\begin{restatable}{Theorem}{AppMultiDefender}
\label{thm:AppMultiDefender}
In the \emph{Multi-Target model}, if $v<kc$, then in the optimal
$\epsilon$-equilibrium all targets are protected with probability $\frac{v}{kc}$. The corresponding $\epsilon$ is $\frac{v(kc-v)}{cnk}$.
\end{restatable}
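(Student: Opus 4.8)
The plan is to mirror the structure of the baseline $\epsilon$-equilibrium analysis, exploiting the fact that under a symmetric profile the Multi-Target game behaves like the baseline game with the per-target cost $c$ replaced by the aggregate cost $kc$. First I would restrict attention to symmetric profiles in which all $nk$ targets carry a common coverage $q$. Under such a profile the attacker, being indifferent, attacks each target with probability $\frac{1}{nk}$, so a defender is attacked at one of her $k$ targets with total probability $\frac{1}{n}$, and a direct computation gives her expected payoff $u(q)=\frac{(v-nkc)q-v}{n}$ — exactly the baseline payoff with $c$ replaced by $kc$.

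The technical core is to characterize a single defender's best response to such a symmetric profile over the full (combinatorial) deviation space, not merely over deviations that keep her own targets equal. I would split the deviations into two types. In an ``escape'' deviation she raises every one of her targets infinitesimally above $q$ so that she is never attacked; since any target left at or below $q$ could be hit, all $k$ targets must be raised, giving payoff $\to -kqc$ and gain $d_1(q)=\frac{v(1-q)}{n}$. In a ``dive'' deviation she allows herself to be attacked; here I would prove the key lemma that any configuration whose minimum coverage is $\mu\le q$ yields payoff at most $\mu(v-kc)-v$: the $t$ targets at the minimum contribute $\mu(v-tc)-v$ under uniform tie-breaking among them, while every higher target only adds cost, so pushing all targets down to the common level $\mu$ dominates. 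Because $v<kc$ makes $\mu(v-kc)-v$ decreasing in $\mu$, the best dive sets all targets to $0$, with payoff $-v$ and gain $d_2(q)=\frac{v(1-q)}{n}+(qkc-v)$. Hence the best-response gain to a symmetric profile is exactly $\max\{d_1(q),d_2(q)\}$.

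With this in hand the minimization is routine: $d_2-d_1=qkc-v$ crosses zero at $q=\frac{v}{kc}$, $\epsilon(q)=d_1$ is decreasing on $[0,\frac{v}{kc}]$ and $\epsilon(q)=d_2$ is increasing on $[\frac{v}{kc},1]$, so the minimum over symmetric profiles is attained at $q=\frac{v}{kc}$ with $\epsilon=\frac{v(kc-v)}{cnk}$, matching the claim.

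Finally I would rule out that any asymmetric profile attains a smaller $\epsilon$, adapting the corresponding step of the baseline proof. Here I would fix the globally minimal coverage $q_{\min}$, let $M$ be the total number of targets at that level, and pick a defender owning $t$ of them (who is therefore attacked with probability $\frac{t}{M}$); I would then show her escape-or-dive deviation gain strictly exceeds $\frac{v(kc-v)}{cnk}$, just as the $\alpha$ minimal defenders do in the baseline argument. I expect this last step to be the main obstacle, because the attack probability $\frac{t}{M}$ a minimal defender faces now depends on the arbitrary distribution of minimal targets across defenders, rather than equaling the clean $\frac{1}{\alpha}$ of the single-target case, so the strict inequality must be established uniformly over all such distributions.
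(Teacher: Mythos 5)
Your symmetric-profile analysis is correct and follows the paper's route: the escape gain $d_1(q)=\frac{v(1-q)}{n}$, the dive bound, and the minimization at $q=\frac{v}{kc}$ all check out, and your ``dive lemma'' (any downward deviation with minimum $\mu$ is dominated by uniform coverage $\mu$, hence by $\mu=0$) is actually a cleaner justification than the paper's bare assertion that a deviating defender should equalize her targets. The genuine gap is the final step, which you yourself flag as the obstacle: ruling out asymmetric profiles. As sketched --- ``pick a defender owning $t$ of the $M$ minimal targets and show her escape-or-dive gain strictly exceeds $\epsilon^*=\frac{v(kc-v)}{cnk}$'' --- the step fails for an arbitrary such defender, and not only for the reason you name. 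Concretely, take the profile in which every defender places one target at $q_{\min}=\frac{v}{kc}$ and her other $k-1$ targets at $q_{\min}+\eta$ for small $\eta>0$. Then $M=n$ and every defender has $t/M=1/n$; her escape gain is $\frac{v(1-q_{\min})}{n}-\delta c=\epsilon^*-\delta c<\epsilon^*$ for every admissible $\delta>0$ (it only reaches $\epsilon^*$ in an unattained limit), and the dive that lowers only her minimal target gains $\epsilon^*-v\bigl(1-\tfrac{1}{k}\bigr)<\epsilon^*$. The only deviation that strictly beats $\epsilon^*$ is the full dive of all $k$ targets to $0$, whose gain is $\epsilon^*+(k-1)\eta c$: the strict excess comes entirely from the cost saved on the \emph{above-minimum} targets, not from the attack probability $t/M$ on which your sketch concentrates.

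To close the argument you therefore need two further ingredients: (i) a pigeonhole choice of defender with $t/M\ge 1/\beta'$, where $\beta'\le n$ is the number of defenders owning minimal targets --- when $\beta'<n$, or when some defender has $t/M>1/n$ strictly, the inequality goes through much as in the paper's case of $\beta$ internally symmetric minimal defenders; and (ii) in the residual case where every defender has exactly $t/M=1/n$ yet the profile is not symmetric, a deviation that also collapses her above-minimum targets to the minimal level, so that the positive slack $\sum_j (q_{ij}-q_{\min})c>0$ supplies the strict inequality. For comparison, the paper avoids this by splitting into ``each defender internally symmetric'' (where the minimal defenders face attack probability $1/\beta$ with $\beta<n$) and ``some defender internally asymmetric'' (dispatched by a reduction to the first case that is itself somewhat loose); your single unified argument is more direct and would be an improvement, but as written the key strict inequality is not established, and for the deviations you explicitly name it is false in the boundary profile above.
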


Thus, as $n$ increases, the optimal approximate
equilibrium approaches a Nash equilibrium.
Figure \ref{fig3} illustrates the relationship between $\epsilon$ and
the number of targets each defender protects when $v=10$ and $c=1$.
In this figure, $\epsilon=0$ when $k\leq 10$, which means that
an exact Nash equilibrium exists; $\epsilon$ increases with $k$ when
$k>10$, but at a decreasing rate, converging to $\frac{v}{n}$ as $k
\rightarrow \infty$.

\begin{figure}
\begin{center}
\includegraphics[width=120mm,height=60mm]{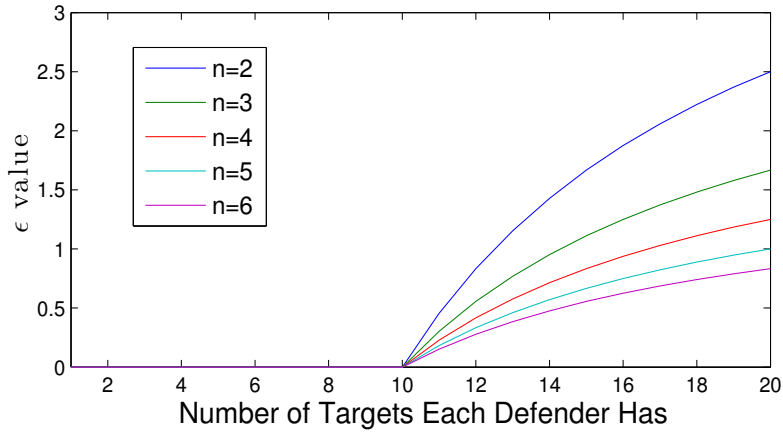}
\caption{$\epsilon$ value when $v=10,c=1$}\label{fig3}
\end{center}
\end{figure}

Finally, we characterize socially optimal welfare, and, subsequently, put everything together in describing the price of anarchy.
 
 \begin{restatable}{Theorem}{SocialMultiDefender}
 \label{thm:Social_Multi_Defender}
In the \emph{Multi-Target model}, the optimal social welfare $SW_O$ is
\begin{equation}
SW_O=
\begin{cases}
-cnk, &\text{if $v\geq cnk$;}\\
-v, &\text{if $v<cnk$.}
\end{cases} \nonumber
\end{equation}
\end{restatable}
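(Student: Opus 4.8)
The plan is to mirror the argument used for socially optimal welfare in the Baseline model, now accounting for the fact that there are $nk$ targets in total while the attacker still strikes only one of them. First I would write social welfare as a function of the full coverage matrix $\{q_{ij}\}$ together with the attacker's best response. Using the per-target payoffs $u_{ij}^a = -v + q_{ij}(v-c)$ and $u_{ij}^u = -q_{ij}c$, the contribution of a target attacked with probability $p_{ij}$ simplifies to $-p_{ij}v(1-q_{ij}) - q_{ij}c$, so summing over all targets gives $SW = -c\sum_{i,j} q_{ij} - \sum_{i,j} p_{ij}\,v(1-q_{ij})$. Since the attacker concentrates all of its attack mass on the target(s) of minimum coverage $q_{\min} = \min_{i,j} q_{ij}$, every target receiving positive $p_{ij}$ has coverage exactly $q_{\min}$, and the damage term collapses to $v(1-q_{\min})$. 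Thus $SW = -c\sum_{i,j}q_{ij} - v(1-q_{\min})$, where the planner controls the $q_{ij}$ but must anticipate this attacker response.

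The key reduction, exactly as in the Baseline analysis, is to show that the optimum is attained only when every target is covered with a common probability $q$. The argument: any target whose coverage strictly exceeds $q_{\min}$ is attacked with probability $0$, so lowering its coverage down to $q_{\min}$ strictly reduces the cost term $c\sum_{i,j} q_{ij}$ while leaving $q_{\min}$—and hence the damage term—unchanged. Iterating over all above-minimum targets, a social planner may assume without loss that $q_{ij} = q$ for all $i,j$. Substituting, welfare as a function of the common coverage becomes the affine map $SW(q) = -v(1-q) - cnk\,q = (v - cnk)q - v$ on $q \in [0,1]$.

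Finally I would optimize this one-dimensional affine function. Its slope is $v - cnk$, so the maximizer is $q=1$ when $v \ge cnk$, giving $SW_O = -cnk$, and $q=0$ when $v < cnk$, giving $SW_O = -v$, which is precisely the claimed case split. The only point requiring care—and the main obstacle—is the reduction to symmetric coverage: one must argue cleanly that the attacker's best response places mass only on minimum-coverage targets (so that, even under uniform tie-breaking, the expected damage depends on $q_{\min}$ alone), and that pulling an above-minimum target's coverage down to $q_{\min}$ is a strict welfare improvement that cannot disturb the value of $q_{\min}$ itself. Once this is established, the remaining steps are routine algebra identical in spirit to the Baseline case.
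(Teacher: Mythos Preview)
Your proposal is correct and follows essentially the same approach as the paper's proof sketch: reduce to symmetric coverage by observing that any above-minimum target is attacked with probability zero and can have its coverage lowered without changing the damage term, then optimize the resulting affine function $SW(q)=(v-cnk)q-v$. Your version is in fact more explicit than the paper's sketch, which simply asserts the reduction and refers back to the Baseline case for the optimization.
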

 
  Thus, just as in the baseline case, the defenders will generally over-invest in security.

If $v\geq kc$, there is a unique Nash equilibrium with all targets
protected with probability 1. 
The corresponding social welfare is $$SW_E=-cnk$$
Because it is the only Nash equilibrium, the \emph{Price of Anarchy} is
\begin{equation}
PoA=
\begin{cases}
1, &\text{if $v\geq cnk$;}\\
\frac{nkc}{v}, &\text{if $ck\leq v<cnk$.}
\end{cases} \nonumber
\end{equation}

If $v<kc$, there is no Nash equilibrium. The optimal approximate
equilibrium features identical coverage probability of $\frac{v}{kc}$ for all targets.  The corresponding Social Welfare is 
$$SW_E=(v-cnk)\frac{v}{kc}-v,$$ 
and the associated $\frac{v(kc-v)}{cnk}$-Price of Anarchy is $n+1-\frac{v}{kc}$.
Clearly, in either case, and just as in the baseline model, the price
of anarchy is unbounded, growing linearly with $n$.

We now consider how PoA changes as a function of $k$, i.e. the number
of targets each defender has. When $k\leq \frac{v}{cn}$, a Nash
equilibrium exists and the PoA is $1$; when $\frac{v}{cn}<k\leq
\frac{v}{c}$, PoA increases linearly in $k$ with slope
$\frac{nc}{v}$. However, when $k>\frac{v}{c}$, a Nash equilibrium does
not exist and the approximate PoA is $n+1-\frac{v}{kc}$, which
increases very slowly with $k$, and is bounded by $n+1$ when $k
\rightarrow \infty$. Figure \ref{fig2} illustrates the relationship
between (approximate) Price of Anarchy and $k$ for $n=2$. 
When $k$ is very small, PoA = 1.
For intermediate $k$, PoA increases linearly, and when $k$ is
sufficiently large, Nash equilibrium no longer exists, and
$\epsilon$-PoA increases quite slowly, converging to 3 when $k
\rightarrow \infty$.

\begin{figure}
\begin{center}
\includegraphics[width=120mm,height=67.5mm]{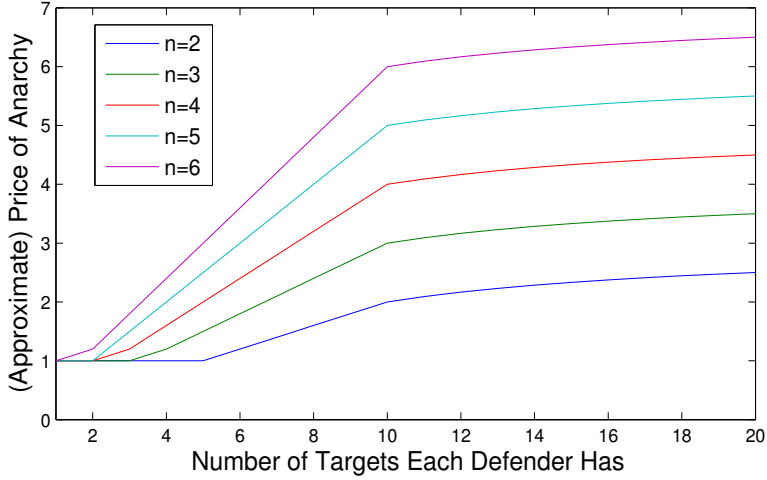}
\caption{(Approximate) Price of Anarchy when $\frac{c}{v}=0.1$}\label{fig2}
\end{center}
\end{figure}

\subsection{The General Model}

Both the baseline and the multi-target models made rather strong
assumptions about the structure of the utility functions of the
players.
In the general model, we relax these assumptions, allowing for
arbitrary utilities for the players when the target is attacked or
not, and when it is protected or not (when attacked).
Quite surprisingly, our findings here are \emph{qualitatively
  different}: the special case of the baseline and
multi-target models turns out to be an exception, rather than the rule
when more general models are considered.

Just as before, we start by characterizing Nash and approximate Nash equilibria.

\begin{restatable}{Theorem}{EquGeneral}
\label{thm:EquGeneral}
In the \emph{General model}, 
Nash equilibrium exists \emph{if and only if} $U^c-U^u\geq
kc-\frac{(n-1)(\Omega-U^c)}{n}$. 
In this equilibrium all targets are protected
with probability 1.
\end{restatable}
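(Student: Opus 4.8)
The plan is to mirror the structure of the baseline and multi-target arguments: isolate the symmetric profile as the only equilibrium candidate, pin it at full coverage, and then read off the existence condition from a single binding downward deviation. First I would argue that any Nash equilibrium must be \emph{symmetric}, i.e.\ every one of the $nk$ targets carries a common coverage $q$. Indeed, if the coverages were not all equal, pick any target whose coverage strictly exceeds the global minimum; since the attacker strikes only a minimum-coverage target, this target is attacked with probability $0$, so its owner earns $\Omega - q_{ij}c$ on it and can strictly improve by shaving $q_{ij}$ downward while keeping it above the minimum, contradicting equilibrium. Hence only symmetric profiles need be considered.

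Next I would compute the symmetric payoff. With common coverage $q$ the attacker hits each target with probability $\tfrac{1}{nk}$, so each defender earns
\[
u_i(q)=\frac{1}{n}\bigl[(U^c-U^u-c)q+U^u\bigr]+\Bigl(k-\frac{1}{n}\Bigr)(\Omega-qc).
\]
To rule out every $q<1$, I would consider the deviation that raises all $k$ of a defender's targets to $q+\delta$, removing them from the attacker's support entirely. As $\delta\to 0$ the gain tends to $\tfrac{1}{n}\bigl(u_{ij}^u-u_{ij}^a\bigr)=\tfrac1n\bigl(\Omega-U^u-(U^c-U^u)q\bigr)$, which is positive for $q<1$ because $\Omega\ge U^c\ge U^u$. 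Thus $q=1$ is the unique candidate, matching the claimed conclusion that all targets are covered with probability $1$.

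The crux is the downward-deviation analysis at $q=1$. Here I would show that, once every un-attacked target is optimally pushed just above the defender's minimum coverage $q'$ (so as to contribute $\Omega-q'c$ each without drawing the attack), \emph{every} deviation with minimum coverage $q'<1$ yields the \emph{same} value,
\[
f(q')=(U^c-U^u-kc)\,q'+U^u+(k-1)\Omega,
\]
\emph{independently} of how many targets sit at the minimum and how the unit of attack probability is split among them. This collapse is the step I expect to be the main obstacle, as it requires care with the tie-breaking limit and the observation that linearity makes concentrating versus spreading the attack irrelevant. Since $f$ is linear in $q'$ with slope $U^c-U^u-kc$, its supremum over $q'\in[0,1)$ is attained at $q'=0$ exactly when $U^c-U^u<kc$, and at $q'\to 1$ otherwise.

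Finally I would impose the equilibrium inequality $\sup_{q'<1}f(q')\le u_i(1)$. In the regime $U^c-U^u\ge kc$ the binding value $f(1^-)=U^c-kc+(k-1)\Omega$ always satisfies it: the inequality reduces to $(U^c-\Omega)\tfrac{n-1}{n}\le 0$, which holds since $\Omega\ge U^c$, so an equilibrium always exists and the stated threshold is automatically met. In the regime $U^c-U^u<kc$ the binding value is $f(0)=U^u+(k-1)\Omega$, and a direct rearrangement of $f(0)\le u_i(1)$ yields precisely $U^c-U^u\ge kc-\frac{(n-1)(\Omega-U^c)}{n}$. Combining the two regimes gives the claimed if-and-only-if, with the equilibrium being the symmetric full-coverage profile, completing the argument.
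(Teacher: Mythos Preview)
Your proposal is correct and follows essentially the same route as the paper: reduce to symmetric profiles, eliminate $q<1$ via the infinitesimal upward deviation, and then test $q=1$ against the best downward deviation, splitting on the sign of $U^c-U^u-kc$. The only cosmetic difference is in the downward-deviation step: the paper argues directly that an optimal deviation must place all $k$ targets at the same level $q'$ (since any target not attacked can have its coverage lowered), whereas you derive the same payoff $f(q')$ by pushing non-minimal targets to $q'+\varepsilon$ and noting the linearity makes the number of targets at the minimum irrelevant---both lead to the identical $f(q')=(U^c-U^u-kc)q'+U^u+(k-1)\Omega$ and the same case analysis.
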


\begin{proof}

We firstly claim that Nash equilibrium can appear \emph{only if}
coverage probabilities of all of targets $t_{ij}$ are
identical. Otherwise, there will be a target $t_{ik}$ which has the
probability 0 of being attacked, and the defender $i$ has an incentive
to decrease $q_{ik}$. To determine a Nash equilibrium, we therefore need only consider scenarios in which all targets have the same coverage probability. 

When all targets have the same coverage probability $q$ to be
protected, the utility of each defender is $$u=\frac{(U^c-U^u-nkc)q+U^u+(nk-1)\Omega}{n}.$$
If $q<1$, then some defender $i$ could increase $q$ to $q+\delta$ for
all of her targets to ensure none of them are attacked, and obtain
utility of $u'=k\Omega-k(q+\delta)c$, so that
$$u'-u=\frac{(U^c-U^u)(1-q)+(\Omega-U^c)-nkc\delta}{n}.$$
As $U^c\geq U^u$, $\Omega\geq U^c$, and $\delta$ can be arbitrarily
small, $u'-u>0$ when $q<1$, which means that this cannot be a Nash
equilibrium. 
Thus, the only possible equilibrium can be $q_{ij} = 1$ for all
targets $t_{ij}$.

When all targets have the same coverage probability $q=1$, each
defender's utility is
$$u=\frac{U^c-nkc+(nk-1)\Omega}{n}.$$
We claim that if a defender $i$ has an incentive to deviate, it is
optimal for this defender to use the same coverage probability for all her targets. Otherwise, for some target $t_{ik}$ which has probability $0$ of being attacked, she could decrease $q_{ik}'$ to obtain higher utility.  If probabilities of targets protected by defender $i$ are all $q'$ $(0\leq q'<1)$, then her expected utility is
$u'=(U^c-U^u-c)q'+U^u+(k-1)(\Omega-q'c)$, and
$$u'-u=(U^c-U^u-kc)(q'-1)+\frac{(n-1)(U^c-\Omega)}{n}.$$
We therefore have two cases:
\begin{enumerate}[1)]
\item If $U^c-U^u\geq kc$, then $u'-u\leq 0$, and $q = 1$ for all
  targets is a Nash equilibrium.
\item If $U^c-U^u< kc$,  the maximal value of $u'-u$ corresponds to $q'=0$:
$$\max_{0\leq q'<1}u'-u= -(U^c-U^u-kc)-\frac{(n-1)(\Omega-U^c)}{n}.$$
If $kc-\frac{(n-1)(\Omega-U^c)}{n}\leq U^c-U^u<kc$, $u'-u\leq 0$, it is a Nash equilibrium; otherwise, it is not.
\end{enumerate}
To sum up, a Nash equilibrium exists \emph{if and only if} $U^c-U^u\geq
kc-\frac{(n-1)(\Omega-U^c)}{n}$, and the equilibrium corresponds to all
targets having probability 1 of being protected.
\end{proof}


Next, we characterize the optimal approximate equilibrium when no Nash
equilibrium exists.
\begin{restatable}{Theorem}{AppGeneral}
\label{thm:AppGeneral}
In the \emph{General model}, in the optimal $\epsilon$-equilibrium all
targets are protected with probability $\frac{\Omega-U^u}{kc}$. The corresponding $\epsilon$ is $\frac{(\Omega-U^u)(kc-U^c+U^u)}{cnk}$.
\end{restatable}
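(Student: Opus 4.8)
The plan is to reuse the template from the baseline $\epsilon$-equilibrium theorem: first restrict to symmetric profiles in which all $nk$ targets carry a common coverage probability $q$, compute the maximum gain from a unilateral deviation as a function of $q$, minimize that gain over $q\in[0,1]$, and then argue separately that no asymmetric profile attains a smaller $\epsilon$. Using the general-model payoffs together with the fact that a symmetric profile induces attack probability $\tfrac{1}{nk}$ on each target, each defender's utility is the quantity $u=\frac{(U^c-U^u-nkc)q+U^u+(nk-1)\Omega}{n}$ already derived in the proof of Theorem~\ref{thm:EquGeneral}.

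I would then isolate the two relevant deviations. The first is to raise all $k$ of one's targets just above $q$ so as to escape the attack entirely; sending the increment to $0$ gives a supremal gain $d_1=\frac{(\Omega-U^u)-(U^c-U^u)q}{n}$, which is decreasing in $q$. The second is to lower coverage; here I would first note, exactly as in the existence proof, that lowering all $k$ targets uniformly weakly dominates lowering only some (the untouched high-coverage targets merely waste cost), and that the resulting gain is linear and increasing in the amount lowered because $kc-(U^c-U^u)>0$ throughout the no-equilibrium regime. Hence the best decrease drives all targets to $0$, with gain $d_2=\tfrac{n-1}{n}\big[(U^c-U^u)q+(U^u-\Omega)\big]+q\big(kc-(U^c-U^u)\big)$, which is increasing in $q$.

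The heart of the argument is the single identity $d_2-d_1=qkc-(\Omega-U^u)$: the increase deviation binds for $q\le\frac{\Omega-U^u}{kc}$ and the decrease deviation binds above it. Since $d_1$ is decreasing and $d_2$ is increasing, $\max(d_1,d_2)$ is minimized exactly at the crossover $q^\star=\frac{\Omega-U^u}{kc}$, and substituting back yields $\epsilon^\star=d_1(q^\star)=\frac{(\Omega-U^u)(kc-U^c+U^u)}{nkc}$, the claimed optimum. To finish, I would show (mirroring the baseline $\epsilon$-equilibrium argument) that any profile which is not all-targets-equal admits a strictly larger deviation: among the $\alpha<n$ defenders owning the globally minimal-coverage targets the attack is concentrated with probability $\tfrac1\alpha>\tfrac1n$, so one of them can either escape upward or drop to $0$ and gain strictly more than $\epsilon^\star$.

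I expect the main obstacle to be feasibility of $q^\star$, namely the requirement $q^\star=\frac{\Omega-U^u}{kc}\le 1$, i.e.\ $kc\ge\Omega-U^u$. In the baseline and multi-target models $\Omega=U^c$ collapses the no-equilibrium condition to exactly this inequality, but in the general model (where $\Omega>U^c$ is permitted) the no-equilibrium condition $U^c-U^u<kc-\frac{(n-1)(\Omega-U^c)}{n}$ is strictly weaker than $kc\ge\Omega-U^u$. In the intervening gap the unconstrained minimizer would fall at $q^\star>1$ and the true optimum sits at the boundary $q=1$, where the upward deviation disappears; the clean statement of the theorem therefore lives in the regime $kc\ge\Omega-U^u$, and the boundary case must be argued on its own. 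The asymmetric step also needs more care here than in the baseline, since a single defender may simultaneously own minimal- and non-minimal-coverage targets, so the deviation must be chosen with respect to that defender's worst (minimal-coverage) holdings.
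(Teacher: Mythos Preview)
Your proposal is correct and follows essentially the same route as the paper: restrict to symmetric profiles, compute the upward-deviation bound $d_1$ and the downward-deviation bound $d_2$ (after observing that a deviating defender should equalize coverage across her own targets and then drop to zero), locate the crossover at $q^\star=\frac{\Omega-U^u}{kc}$, and rule out asymmetric profiles by the $\alpha<n$ concentration argument together with the within-defender equalization step. Your clean identity $d_2-d_1=qkc-(\Omega-U^u)$ is exactly what the paper's piecewise $\epsilon$ formula encodes, and your observation that the no-equilibrium condition does not by itself force $q^\star\le 1$ when $\Omega>U^c$ is a genuine caveat that the paper's proof does not address.
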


\begin{proof}
When all targets have the same coverage probability $q$, the expected utility of each defender is 
$$u=\frac{(U^c-U^u-nkc)q+U^u+(nk-1)\Omega}{n}.$$
Suppose $0\leq q<1$. If some defender $i$ increases $q$ to
$q+\delta_{ij}$ for target $t_{ij}$, then she would obtain utility
$u'=\sum_{j=1}^{k} \Omega-(q+\delta_{ij})c$, and
\begin{equation}\label{d1}
\begin{aligned}
u'-u&= \frac{\Omega-(U^c-U^u)q-U^u}{n}-\sum_{j=1}^{k} \delta_{ij}c\\
&\leq \frac{\Omega-(U^c-U^u)q-U^u}{n}.
\end{aligned}
\end{equation}
Now we consider scenarios in which a defender $i$ could obtain higher
utility by decreasing protection probability. 
We claim that if a defender $i$ has an incentive to deviate, it is
optimal for this defender to use the same coverage probability for all her targets. Otherwise, for some target $t_{ik}$ which has probability $0$ of being attacked, she could decrease $q_{ik}'$ to obtain higher utility. 
Thus, we need only consider cases in which a defender deviates by
decreasing coverage probabilities for all her targets to $q-\delta$. Her utility will become $u''=(U^c-U^u-kc)(q-\delta)+U^u+(k-1)\Omega$.
Since $U^c-U^u<kc$, $\delta = q$ (the maximal value of $\delta$)
maximizes $u''-u$:
\begin{equation}\label{d2}
\max_{0<\delta\leq q} u''-u= \frac{\Omega-(U^c-U^u)q-U^u}{nk}+kcq+U^u-\Omega.
\end{equation}
By comparing the value of equation (\ref{d1}) and equation (\ref{d2}), we get different values of $\epsilon$ for $\epsilon$-equilibrium:
\begin{equation}
\epsilon=
\begin{cases}
\frac{\Omega-(U^c-U^u)q-U^u}{n}, &\text{if $0\leq q\leq \frac{\Omega-U^u}{kc}$;}\\
\frac{\Omega-(U^c-U^u)q-U^u}{n}+kcq+U^u-\Omega, &\text{if $\frac{\Omega-U^u}{kc}<q\leq 1$.}
\end{cases} \nonumber
\end{equation}
When $q=\frac{\Omega-U^u}{kc}$, we get the minimal $\epsilon=\frac{(\Omega-U^u)(kc-U^c+U^u)}{cnk}$. 

We claim that the $\frac{(\Omega-U^u)(kc-U^c+U^u)}{cnk}$-equilibrium can appear \emph{only if} all targets have the same coverage probability $q$. We prove this by contradiction.  Suppose that targets have different coverage probabilities.
This gives rise to two cases: $1)$ Each defender uses an identical
coverage probability for each target she owns (these may differ
between defenders); and $2)$ Some defender has different coverage probabilities for her targets. 
In case $1)$, there exist $\beta$ defenders ($1\leq \beta<n$) who have
the same minimal coverage probability $q'$. The expected utility for
each defender among these $\beta$ is 
$$u=\frac{(U^c-U^u-k\beta c)q'+U^u+(k\beta-1)\Omega}{\beta}.$$
When $\frac{\Omega-U^u}{kc}<q'\leq 1$, some defender $i$ among these
$\beta$ could decrease the coverage probability of all her targets to
$0$ and obtain the utility of $u_1=U^u+(k-1)\Omega$, so that
\begin{equation}
\begin{aligned}
u_1-u & = \frac{\Omega-(U^c-U^u)q'-U^u}{\beta}+kcq'+U^u-\Omega \\
& > \frac{\Omega-(U^c-U^u)q'-U^u}{n}+kcq'+U^u-\Omega.
\end{aligned} \nonumber
\end{equation}
When $0\leq q'\leq \frac{\Omega-U^u}{kc}$, some defender $i$ among
these $\beta$ can increase coverage probabilities of all her targets to $q'+\delta_3$ to obtain utility of
 $u_2=k\Omega-k(q'+\delta_3)c$, with
 \begin{equation}
\begin{aligned}
u_2-u & = \frac{\Omega-(U^c-U^u)q'-U^u-k\beta c\delta_3}{\beta} \\
& > \frac{\Omega-(U^c-U^u)q'-U^u}{n},
\end{aligned} \nonumber
\end{equation} 
where the inequality holds because $\delta_3$ can be arbitrarily small. Thus, no profile in case $1)$ can be a $\frac{(\Omega-U^u)(kc-U^c+U^u)}{cnk}$-equilibrium. 
In case $2)$, any defender who has different coverage probabilities
among her targets can always increase her 
payoff by decreasing the coverage probabilities of the targets with
higher coverage to yield identical coverage for all targets. Consequently, no
profile in case $2)$ can be a
$\frac{(\Omega-U^u)(kc-U^c+U^u)}{cnk}$-equilibrium.
\end{proof}


As the final step towards characterizing the Price of Anarchy, we
derive optimal social welfare in this model.
\begin{Theorem}
In the \emph{General model}, the optimal social welfare $SW_O$ is
\begin{equation}
SW_O=
\begin{cases}
U^c-nkc+(nk-1)\Omega, &\text{if $U^c-U^u\geq nkc$;}\\
U^u+(n-1)\Omega, &\text{if $U^c-U^u<nkc$.}
\end{cases} \nonumber
\end{equation}
\end{Theorem}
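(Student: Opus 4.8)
The plan is to reuse the template of the baseline and multi-target social-welfare results: first collapse the problem to a single symmetric coverage level, then optimize a one-variable affine objective over $[0,1]$, and finally read off the two cases from the sign of its slope. My first step would be to argue that an optimal configuration can be taken to be \emph{symmetric}, with every target carrying a common coverage probability $q$. The reasoning parallels the earlier proofs: the attacker always strikes a target of minimum coverage, so any target whose coverage strictly exceeds the minimum is certain to go unattacked and contributes $u_{ij}^u=\Omega-q_{ij}c$ to welfare, a quantity that strictly increases as $q_{ij}$ is lowered. Hence every above-minimum coverage can be pushed down to the common minimum without decreasing social welfare, so some symmetric profile is optimal.

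Next I would express total welfare at a symmetric profile as an affine function of $q$: exactly one target is attacked and contributes $u_{ij}^a=(U^c-U^u-c)q+U^u$, while each remaining target contributes $\Omega-qc$, so summing yields an affine $SW(q)$ with slope $U^c-U^u-nkc$. Because the objective is affine on $[0,1]$, its maximum is attained at an endpoint, and which endpoint wins is governed precisely by the sign of that slope---nonnegative exactly when $U^c-U^u\geq nkc$. This is exactly the theorem's case boundary: $q=1$ is optimal in the first case, giving $SW(1)=U^c-nkc+(nk-1)\Omega$, while $q=0$ is optimal in the second, and evaluating $SW$ at the winning endpoint returns the reported value of $SW_O$ in each branch.

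The step I expect to be the genuine obstacle is the reduction to symmetric coverage, i.e. ruling out clever asymmetric profiles. The tempting but incorrect idea is that a planner could heavily protect one designated target while leaving a cheap one lightly covered; but because the attacker always hits a minimum-coverage target, raising any single coverage merely redirects the attack to a cheaper target and wastes the protection cost, so heterogeneity never helps. Making this monotonicity argument airtight---carefully handling ties in the attacker's minimizing set and verifying that equalizing coverage weakly improves welfare in every configuration---is the only delicate part; once it is established, the remaining endpoint evaluation is routine arithmetic.
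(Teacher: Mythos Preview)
Your proposal is correct and follows exactly the paper's own (sketch) proof: reduce to a symmetric coverage $q$ by noting that any target with above-minimum coverage is never attacked and so its coverage can be lowered without harm, then optimize the resulting affine $SW(q)$ over $[0,1]$ by checking the sign of the slope $U^c-U^u-nkc$. One minor caution: when you actually evaluate $SW(0)$ you will obtain $U^u+(nk-1)\Omega$, not the $U^u+(n-1)\Omega$ printed in the theorem statement---this appears to be a typo in the paper, since the subsequent $SW_E$ and PoA formulas all use $(nk-1)\Omega$.
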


\begin{proof}[Proof sketch]
We firstly claim that we could get optimal social welfare \emph{only
  if} all targets have the same coverage probability $q$. Otherwise,
some target $t_{ij}$ has probability $0$ of being attacked, and we can
decrease $q_{ij}$ to improve social welfare. Consequently, we need
only to consider an optimal symmetric coverage probability $q$ to
maximize social welfare, which can be done in a manner similar to that
for the baseline case.
\end{proof}

If $U^c-U^u\geq kc-\frac{(n-1)(\Omega-U^c)}{n}$, the Nash equilibrium is
unique, with all targets protected with probability 1. 
The corresponding social welfare is $$SW_E=U^c-nkc+(nk-1)\Omega.$$

So far we have not yet added any constrains to value of $\Omega$, $U^c$,
 and $U^u$ (except that $\Omega\geq U^c\geq U^u$). 
In order to make \emph{Price of Anarchy} well-defined, we need to add
constraints that values of  $\Omega$, $U^c$, and $U^u$ are all non-positive
(just as in the previous two models) or all non-negative. To be
consistent with previous models, we add constraints that $U^c$, $U^u$
and $\Omega$ are all non-positive (little changes if all are non-negative). 
 
In the case of a unique Nash equilibrium, the price of anarchy is
\begin{equation}
PoA=
\begin{cases}
1, &\text{if $U^c-U^u\geq nkc$;}\\
\frac{U^c-U^u-nkc}{U^u+(nk-1)\Omega}+1, &\text{if $kc-\frac{(n-1)(\Omega-U^c)}{n}\leq$} \\
& U^c-U^u< nkc.
\end{cases} \nonumber
\end{equation}

If $U^c-U^u< kc-\frac{(n-1)(\Omega-U^c)}{n}$, there is no Nash
equilibrium. 
The Social Welfare in the optimal approximate equilibrium is 
$$SW_E=(U^c-U^u-nkc)\frac{\Omega-U^u}{kc}+U^u+(nk-1)\Omega,$$ 
and the $\frac{(\Omega-U^u)(kc-U^c+U^u)}{cnk}$-Price of Anarchy is $\frac{(U^c-U^u-nkc)(\Omega-U^u)}{kcU^u+(nk-1)kc\Omega}+1$.

\begin{figure}
\begin{center}
\includegraphics[width=120mm,height=67.5mm]{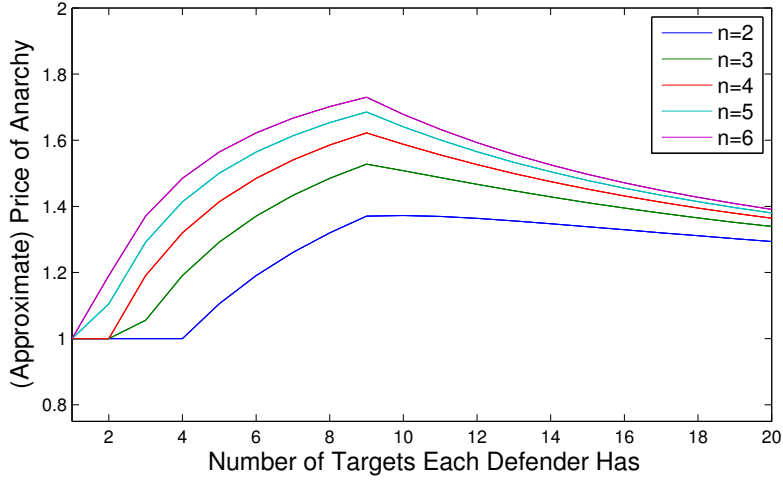}
\caption{(Approximate) Price of Anarchy when $ c=1,\Omega=-1,U^c=-2$ and $U^u=-10$}\label{fig4}
\end{center}
\end{figure}

We now analyze the relationship between ($\epsilon$-)PoA and the values
of $n$ and $k$. Here are the key differences from the Multi-Target Model. First we consider ($\epsilon$-)PoA as
the function of $n$.  
If $\Omega=0$, the result is same as that in the
Multi-Target Model: ($\epsilon$-)PoA linearly increases in $n$, and is
therefore unbounded.  
 However, if $\Omega\neq 0$, while PoA and $\epsilon$-PoA are increasing
 in $n$, as $n \rightarrow \infty$, they approach $1-\frac{c}{\Omega}$ and $1+\frac{U^u-\Omega}{k\Omega}$, respectively.
In other words, PoA (exact and approximate) is bounded by a constant,
for a constant $k$!

Consider now approximate price of anarchy as a function of $k$.
If $\Omega=0$, it is bounded by $n+1$. 
However, if
$\Omega\neq 0$, when $kc-\frac{(n-1)(\Omega-U^c)}{n}\leq U^c-U^u$, it is an
increasing function of $k$. When $kc-\frac{(n-1)(\Omega-U^c)}{n}>U^c-U^u$,
it may at first increase or decrease in $k$, depending on the the
values of the model parameters. However, when $k$ is large enough,
price of anarchy will invariably be decreasing in $k$, and as $k
\rightarrow \infty$, $\epsilon$-PoA $\rightarrow 1$.
Figure \ref{fig4} provides an example
of the relationship between $\epsilon$-PoA and $k$.
Observe that all the curves begin to decrease when $k>10$, and
they all approach 1 as $k \rightarrow \infty$.
Thus, price of anarchy in the general model is only unbounded in the
special case when $\Omega=0$, whereas when $\Omega \ne 0$, price of anarchy is
always bounded by a constant.
This observation is particularly surprising and significant considering the fact
that the baseline and simplified multi-target models are quite
natural, and seemingly innocuous, restrictions of the general case.
\parskip=0pt

\section{Analysis of Interdependent Multi-defender Security Games}
\label{sec:interdep-analysis}

We now develop and analyze a computational framework for approximating Nash equilibria in \emph{interdependent} multi-defender security games. A crucial step in computing (or approximating) a Nash equilibrium of a
game is to consider the problem of computing a best response for
an arbitrary player (in our case, defender, since the attacker's best
response is straightforward).
Next, we develop a novel mixed-integer linear programming formulation
for computing ASE best response, and then propose a hightly effective
heuristic method for approximating ASE in multi-defender games.

\subsection{Computing Defender Best Response: A Mixed-Integer Linear Programming Formulation}

While ASE seems a very natural alternative to SSE even in two-player
security games, we are not aware of any proposals for computing it.
Below, in equations~\ref{O:obj}-\ref{C:nonlin}, we present the first
(to our knowledge) mixed-integer linear programming formulation for
computing ASE which, in our case, would compute a best response for an
arbitrary defender $i$ when the strategies of all other players,
$q_{-i}$, are fixed.
\begin{align}
&\hspace*{-5pt}\max_{a,q_i,s,u,v} u - \sum_{j \in T_i}\sum_{o \in O} c_j^o q_{i,j}^o\label{O:obj}\\
\nonumber & \mathrm{s.t.}\\
&\hspace*{-5pt} 0 \le q_{i,j}^o \le 1 & \forall \ j \in T_i,\ \forall o\label{C:validStrat1}\\
&\hspace*{-5pt}\sum_{o \in O}q_{i,j}^o = 1 & \forall j \in T_i\label{C:validStrat2}\\
&\hspace*{-5pt} a_j \in \{0,1\} & \forall \ j \in T\\
&\hspace*{-5pt}\sum_{j \in T}a_j \geq 1\label{C:validStratAttack}\\
&\hspace*{-5pt} 0 \le v - \sum_o q_{i,j}^o V_{j}^o \le (1-a_j)M & \forall \ j \in T_i\label{C:chooseA1}\\
&\hspace*{-5pt} 0 \le v - \sum_o q_{-i,j}^{o} V_{j}^o \le (1-a_j)M & \forall \ j \in T_{-i}\label{C:chooseA2}\\
&\hspace*{-5pt} s_j = v - \sum_o q_{i,j}^o V_{j}^o & \forall \ j \in T_i\label{C:optAction1}\\
&\hspace*{-5pt} s_j = v - \sum_o q_{-i,j}^{o} V_{j}^o & \forall \ j \in T_{-i}\label{C:optAction2}\\
&\hspace*{-5pt} a_j + Ms_j \geq 1 & \forall \ j \in T\label{C:compOptAttack}\\
&\hspace*{-5pt} u = f(q,a),\label{C:nonlin}
\end{align}
where $M$ is a very large number and
\[
f(q,a) = \frac{\sum_{j \in T_i} a_j \sum_{o \in O} q_{i,j}^o U_{j}^o + \sum_{j \in T_{-i}} a_j \sum_{o \in O} q_{-i,j}^oU_{j}^o}{\sum_{j \in T} a_j}.
\]
While constraint~\ref{C:nonlin} is non-linear, we can linearize it 
using McCormick inequalities.
Constraints~\ref{C:validStrat1} and~\ref{C:validStrat2} ensure that
the defender's strategy is a valid probability distribution.
Constraint~\ref{C:validStratAttack}
ensures that at least one target is chosen by the attacker.
Constraints~\ref{C:chooseA1} and~\ref{C:chooseA2} compute the
optimal attacker utility $v$; alone, they ensure that this utility
corresponds to \emph{some} attack target.
Constraints~\ref{C:optAction1} and~\ref{C:optAction2} compute an
auxiliary variable $s_j$, which is $0$ if and only if attacking a
target $j$ yields an
optimal utility to the attacker.
These variables, together with constraints~\ref{C:compOptAttack} and~\ref{C:chooseA1}-\ref{C:chooseA2}
ensure that the binary variable $a_j = 1$ if and only if the attacker (weakly)
prefers to attack target $j$; that is, these jointly compute the set of
optimal attack targets.
Finally, constraint~\ref{C:nonlin} computes the expected utility to
the defender if the attacker chooses one of his most preferred targets
uniformly at random.


If $M$ is infinite and numbers can be computed to arbitrary precision,
the above formulation is correct.
In practice, of course, numerical precision and stability are an
issue, and they arise with this formulation.
Consider constraints~\ref{C:optAction1} and~\ref{C:optAction2}, which
compute $s_j$ for all targets $j$.
These require that the expected value of the target \emph{exactly}
equal the optimal attacker utility computed in
constraints~\ref{C:chooseA1}-\ref{C:chooseA2}; even a slight error
will technically violate our requirement that $s_j = 0$ at an optimal
target.
Moreover, even if the difference is, indeed, non-zero, from an
attacker's perspective it seems intuitive that sufficiently small
differences from optimal utility are ignored.
We address these problems by adding a fixed small quantity $\delta$ to the
right-hand-side of constraints~\ref{C:chooseA1},~\ref{C:chooseA2},
while subtracting it from the right-hand-side of constraint~\ref{C:compOptAttack}.
Because the constraints are interrelated, we cannot simply choose an
arbitrary $0 <\delta < 1$, but must ensure that $\delta$ and $M$ satisfy
the constraint that $M\delta < 1-\delta$.


\subsection{Approximating ASE}
Previously,~\citeA{Vorobeychik08}
presented a convergent equilibrium approximation algorithm based on
\emph{simulated annealing (SA)} that would be applicable in our setting.
They additionally showed in simulation that SA is actually outperformed by
a simple heuristic based on \emph{iterated best response (IBR)}
dynamics.
Here, we interpret IBR as a local search heuristic, with the property that if the starting point
is a Nash equilibrium, IBR will never deviate from it (i.e., Nash
equilibrium is a fixed point).
Clearly, then, the choice of a starting point can be significant for
the performance of IBR, making it natural to consider coupling it with
random restarts.
Our main contribution in this section is to present evidence
that IBR with random restarts is a highly effective equilibrium
approximation approach in our setting (and outperforms several alternatives).
This is both of broad significance, and of particular importance in
our setting, as we use this algorithm for our analyses below.

\begin{figure}[ht]
\centering
\begin{tabular}{cc}
\includegraphics[scale=0.85]{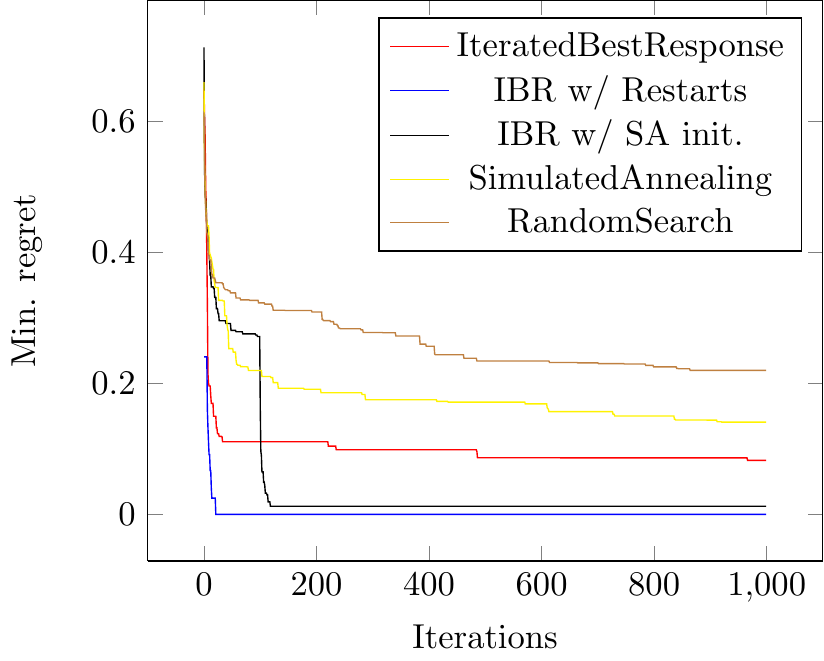}
&
\includegraphics[scale=0.85]{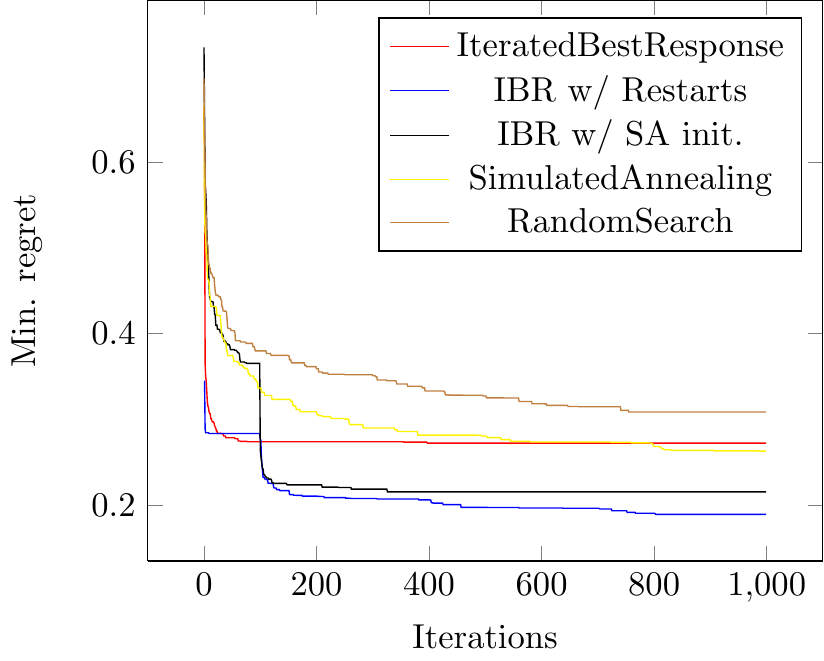}
\end{tabular}
\caption{Comparison of algorithms. Left: $\left|D\right|=2$ and
  $\left|T\right|=10$. Right: $\left|D\right|=5$ and $\left|T\right|=20$.}
\label{fig:algcompare}
\end{figure}

We compare the following Nash equilibrium approximation algorithms
executed for 1000 iterations: random search
(RS), which simply generates 1000 strategy profiles randomly, computes
the game theoretic regret of each, and chooses a profile with the
smallest regret; simulated annealing (SA), with the temperature
exponentially increasing with iterations; and iterated best response (IBR)
with no restarts.
We also include in the comparison two additional variations of IBR: the first uses SA
for the first 100 iterations, and then switches to IBR for the
remainder (starting with the best approximation produced by SA); the
second is IBR with random restarts, which we term RIBR. RIBR includes initial corner cases that may be hard to converge to in a limited amount of time (i.e., all defenders not defending, all defenders defending completely).
We execute our comparison on games with 2 players and 10 targets and
games with 5 players and 20 targets.
In all cases, targets are divided evenly among the players, and values
over the targets are generated uniformly at random.
The cost of defense is fixed at $c=0.2$, and the targets are assumed
to be independent (but players may have values for targets under
the control of other defenders).
Figure~\ref{fig:algcompare} demonstrate that in both settings, RIBR outperforms other
alternatives.

\begin{figure*}[ht!]
\centering
\begin{tabular}{ccc}
Grid & Erd\H{o}s-R\'{e}nyi & Preferential Attachment\\
\includegraphics[scale=0.19]{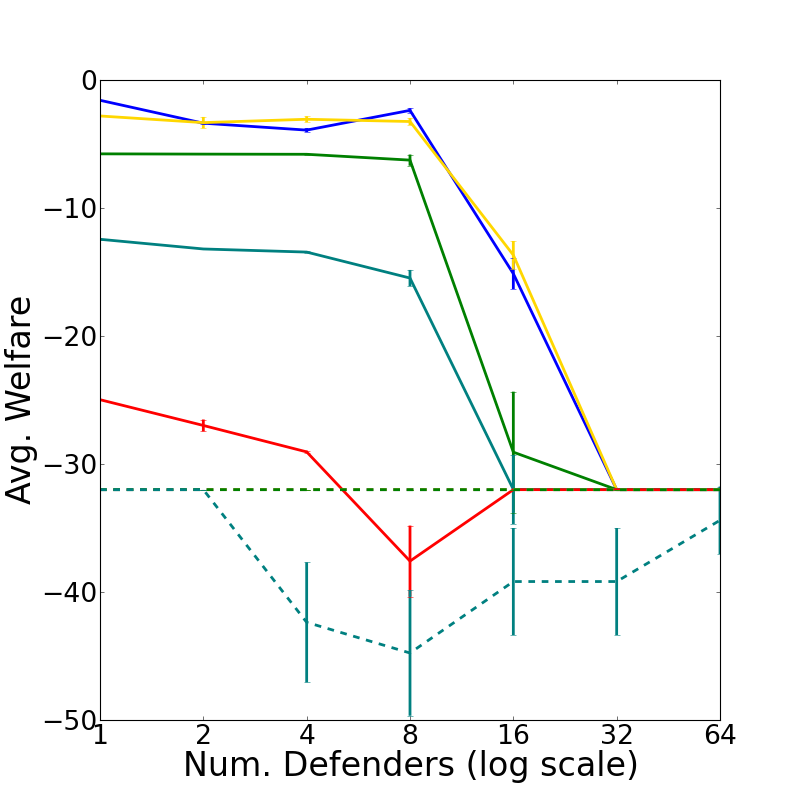}
&\includegraphics[scale=0.19]{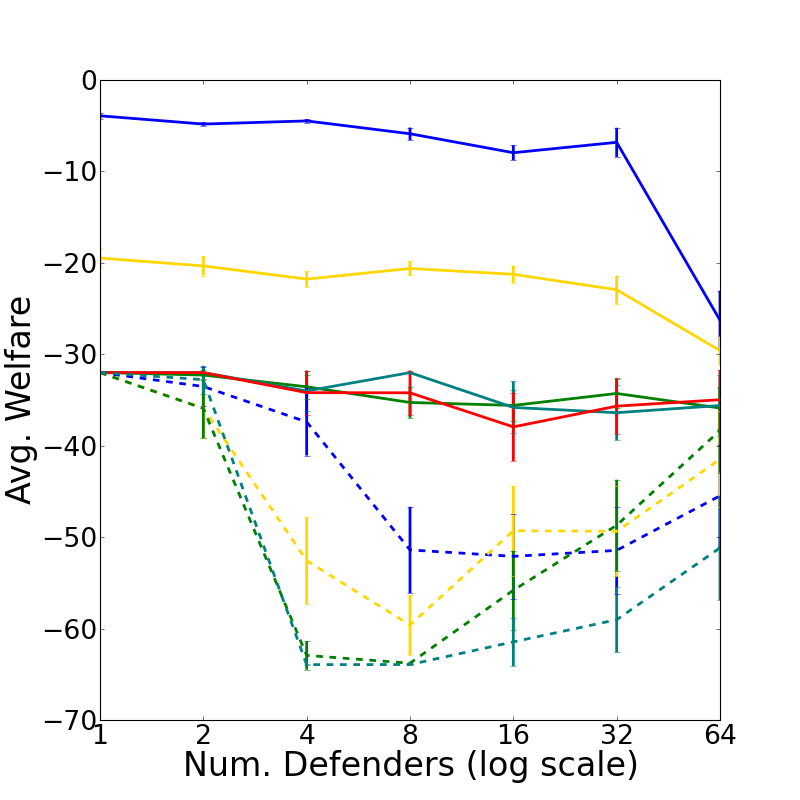}
&\includegraphics[scale=0.19]{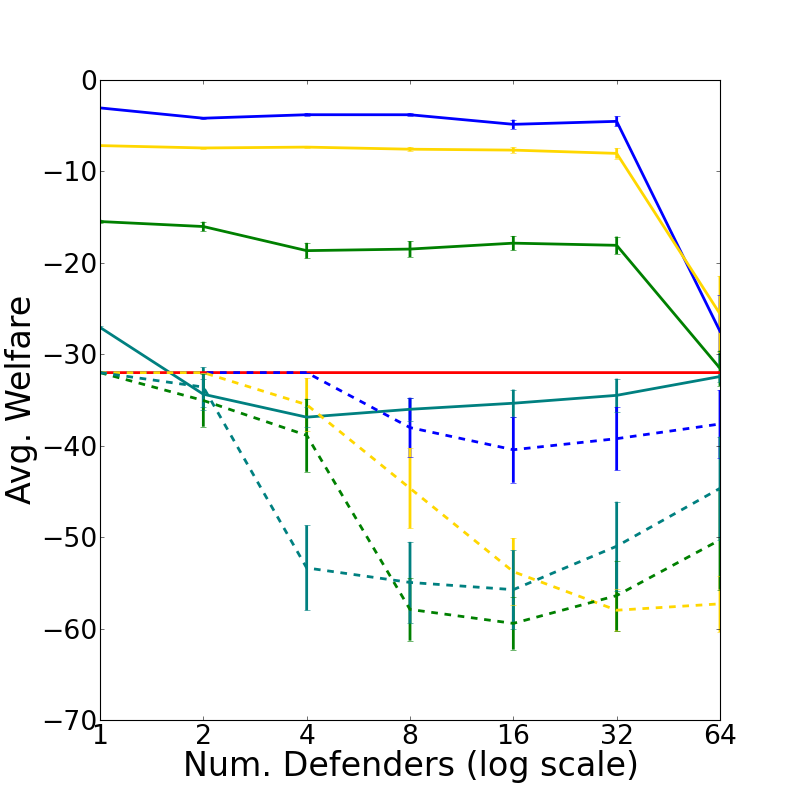}\\
\includegraphics[scale=0.19]{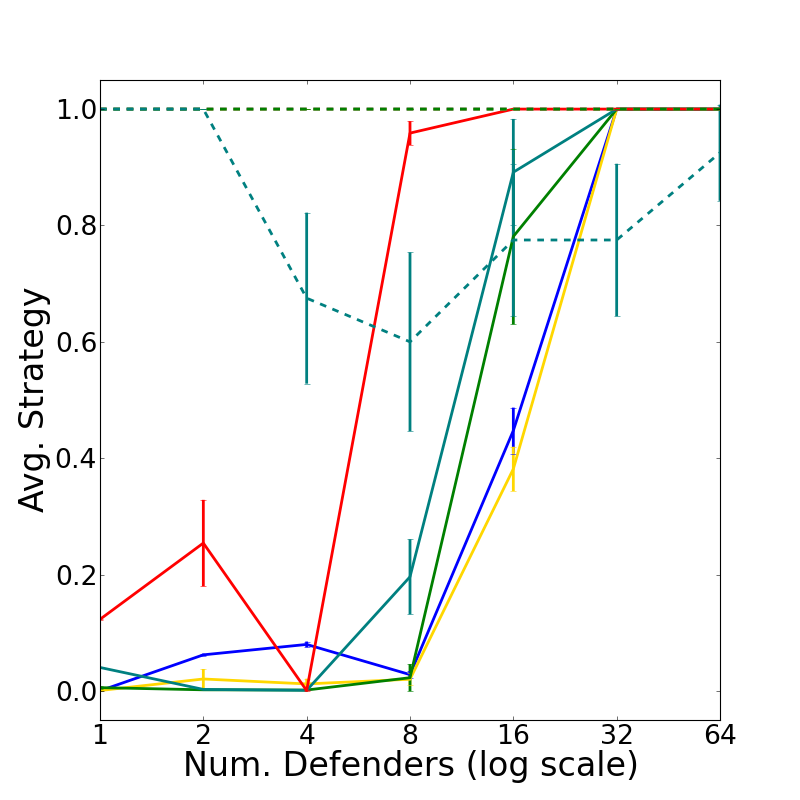} 
&\includegraphics[scale=0.19]{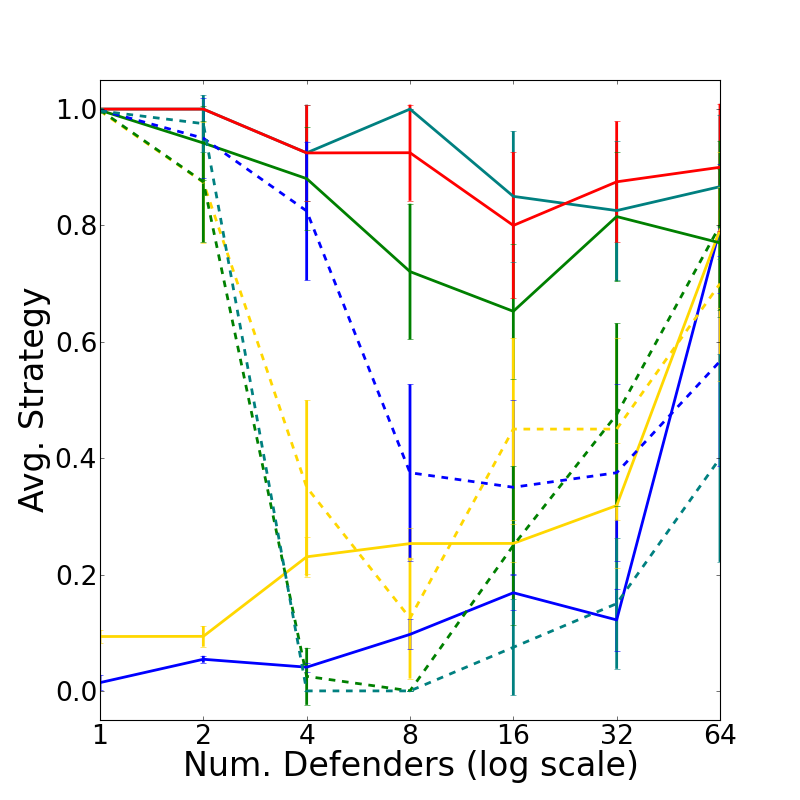} 
&\includegraphics[scale=0.19]{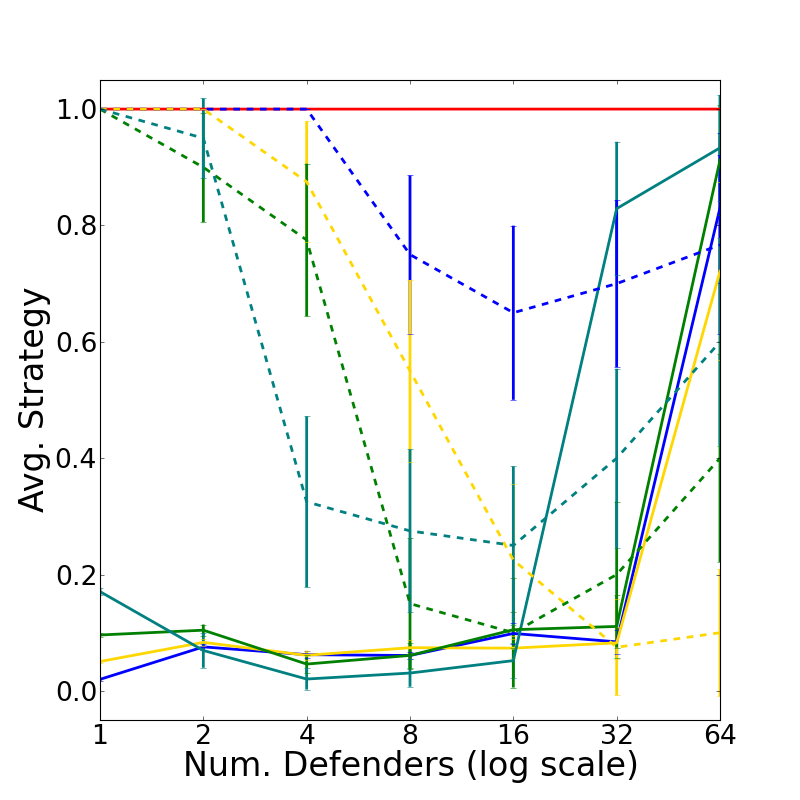}
\end{tabular}
\includegraphics[scale=0.47]{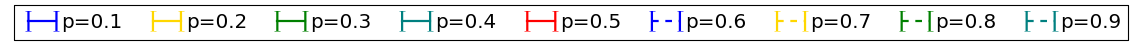}
\caption{Approximate equilibrium security outcomes for varying cascade probability $p$, as a function of (log of) the number of players (level of decentralization) for the three synthetic networks. Top: Social welfare. Bottom: Average strategy (higher strategy corresponds to higher average probability of defense). }
\label{fig:stratWelf}
\end{figure*}

\subsection{Analysis of Multi-Defender Games on Synthetic Networks}

For our first set of experiments, we use RIBR on 3 artificially
generated networks, with $40$ samples for each parameter
variation. First, we will illustrate and compare the results of our
interdependent multi-defender game on artificial networks. We use 3
commonly analyzed network structures: a grid, Erd\H{o}s-R\'{e}nyi
networks, and preferential attachment networks. In all of the
generated networks, there are $64$ nodes or targets.
For the latter two, we use the Metis graph partitioning software to
partition the nodes (targets) among defenders.
This software partitions nodes to minimize connectivity among the
targets belonging to different defenders, a property that we expect to
common hold in real networks due to efficiency considerations.

We begin by considering average strategies, as well as social welfare, for the three different synthetic networks (grid, Erd\H{o}s-R\'{e}nyi, and preferential attachment), as a function of the number of players (degree of decentralization) and the cascade probability (interdependent risk).
The results are shown in Figure~\ref{fig:stratWelf}.
We do not show these as a function of defense cost as increasing defense cost roughly mirrors decreasing cascade probability $p$.
The first rather stark observation is that network structure makes little difference when each node is controlled by a single player, but it makes a significant qualitative difference both for social welfare and actual strategies utilized by the players in all other cases.

Looking at the results in greater detail, let's consider first social welfare (Figure~\ref{fig:stratWelf}, top).
First, when interdependent risk is low ($0.1\le p \le 0.3$), social welfare follows a relatively simple pattern: increasing decentralization makes initially almost no difference, until sufficiently many players are involved, at which point social welfare falls rather dramatically; this pattern is roughly monotonic with increasing decentralization, with worst outcomes emerging when each player controls a single node, and mirrors previous findings~\cite{Vorobeychik11}.
Both Erd\H{o}s-R\'{e}nyi and preferential attachment networks are less susceptible to the negative effects of decentralization in this case than the grid network, where the dropoff occurs with fewer players (less decentralization).
This may be largely a consequence of the fact that network partitioning tools we use attempt to minimize interdependence among players---something that is likely to mirror reality---and far more opportunities for doing so exist in Erd\H{o}s-R\'{e}nyi and preferential attachment models.

When $p$ is higher (greater interdependencies), the results exhibit an entirely new phenomenology.
Across all three network models, for a sufficiently large $p$, the impact of decentralization is non-monotonic: an intermediate level of decentralization has the most detrimental impact on security, while a highly decentralized system becomes near-optimal!

\begin{figure*}[t!]
\centering
\begin{tabular}{ccc}
Power Network 1 & Power Network 2 & Power Network 3\\
\includegraphics[scale=0.19]{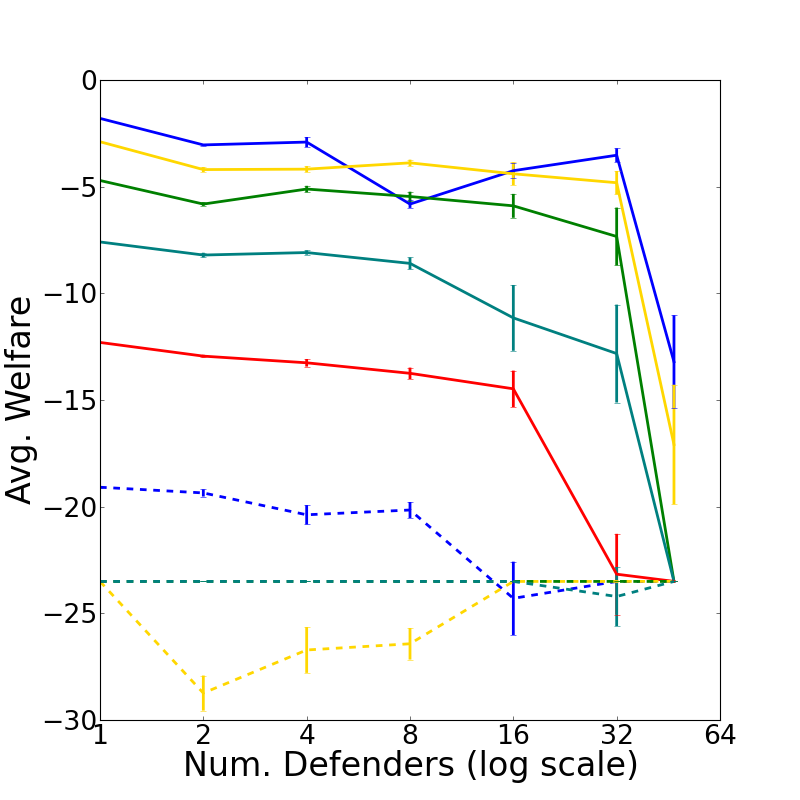}
&\includegraphics[scale=0.19]{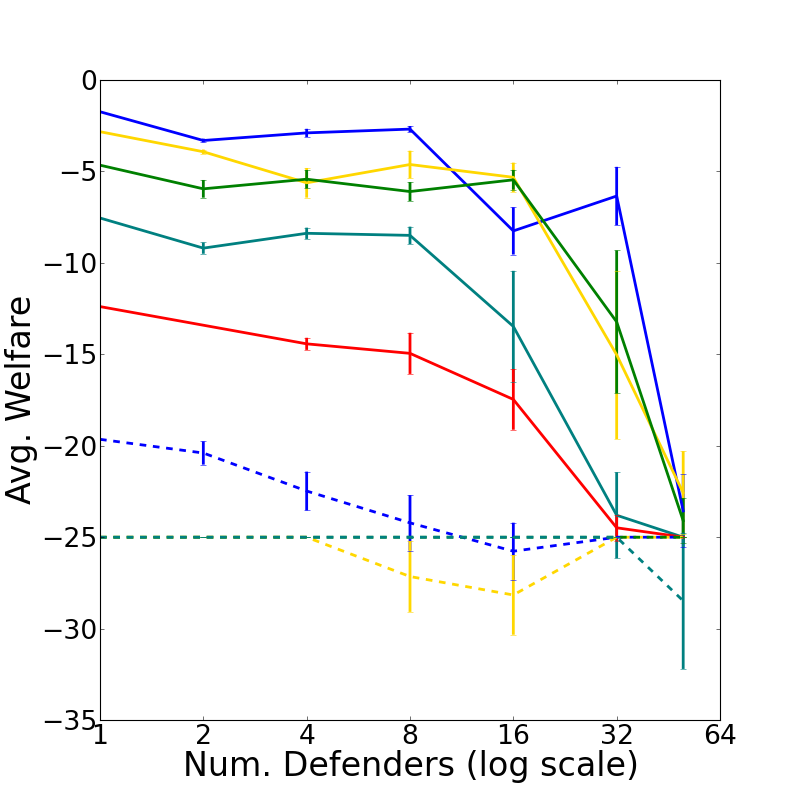}
&\includegraphics[scale=0.19]{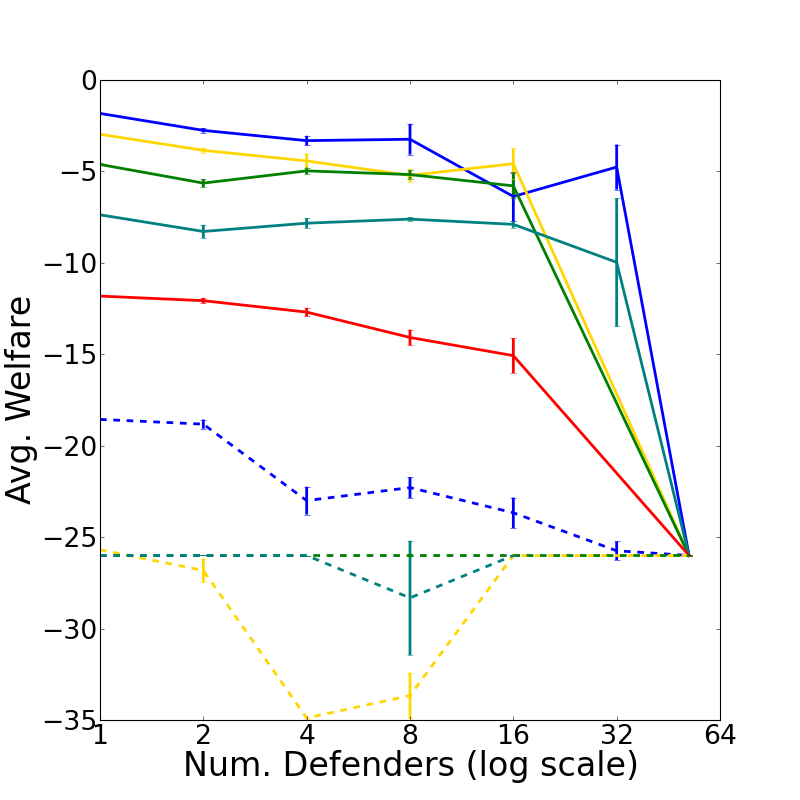}\\
\includegraphics[scale=0.19]{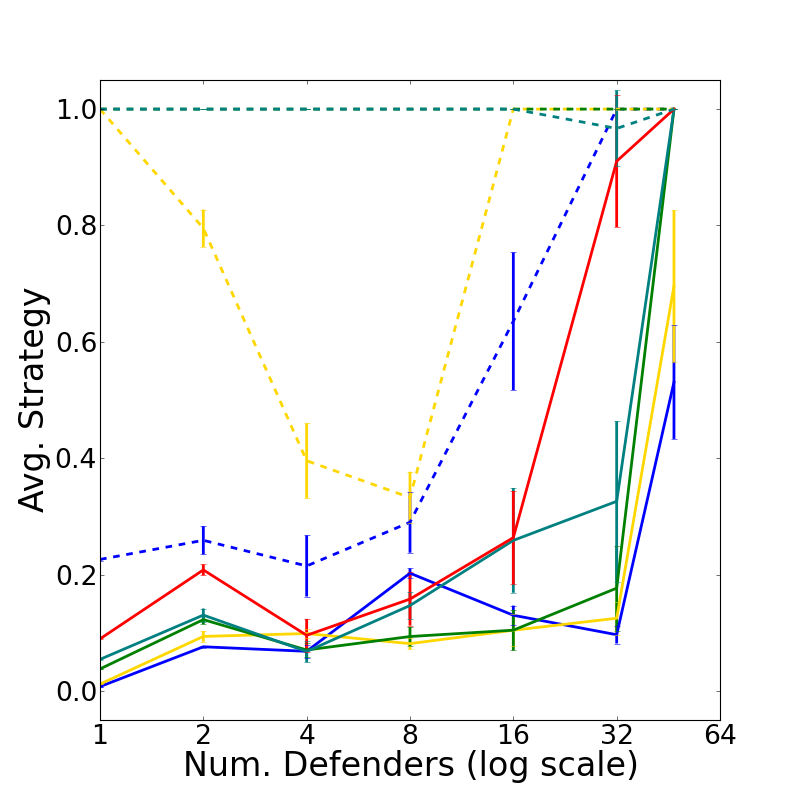} 
&\includegraphics[scale=0.19]{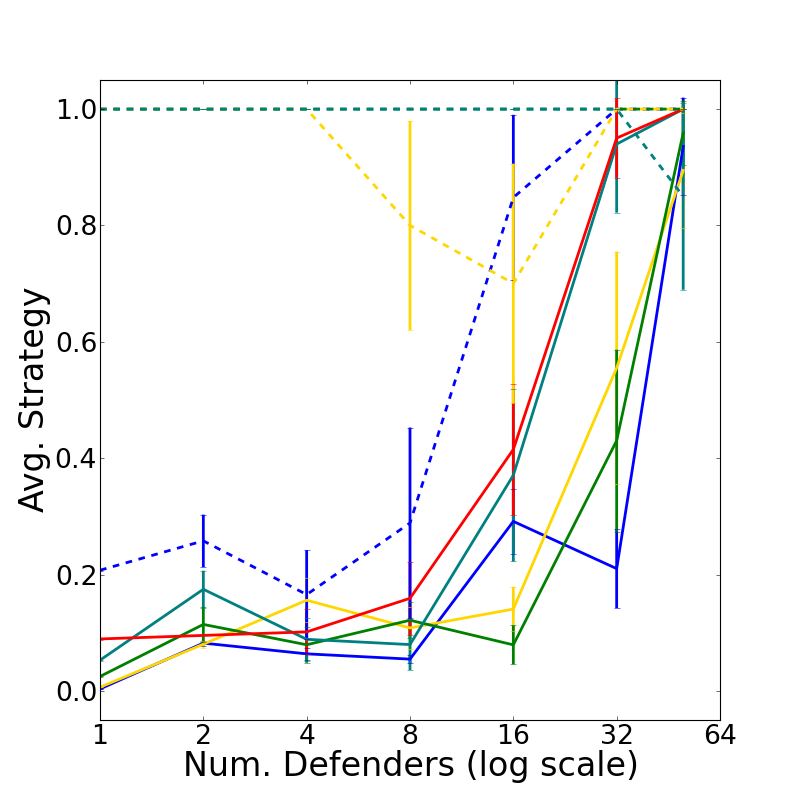} 
&\includegraphics[scale=0.19]{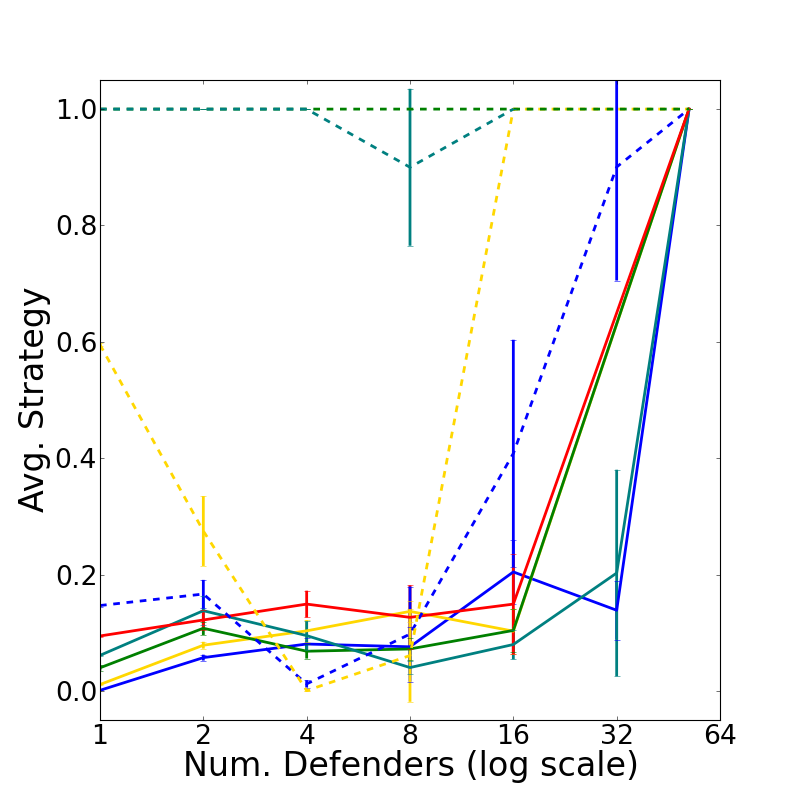}
\end{tabular}
\includegraphics[scale=0.47]{probability_legend}
\caption{Approximate equilibrium security outcomes for varying cascade probability $p$, as a function of (log of) the number of players (level of decentralization) for the three real power networks. Top: Social welfare. Bottom: Average strategy (higher strategy corresponds to higher average probability of defense). }

\label{fig:stratWelfReal}
\end{figure*}
\begin{figure*}[ht!]
\centering
\begin{tabular}{ccc}
1 player & 4 players & 64 players\\
\includegraphics[scale=0.15]{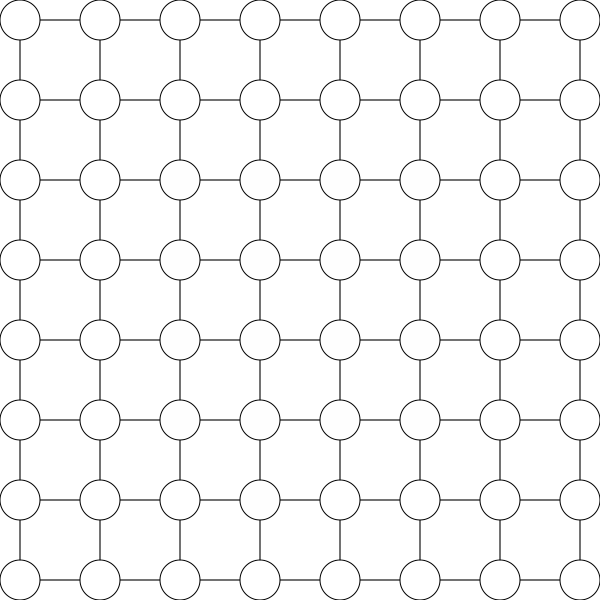}
&\includegraphics[scale=0.185]{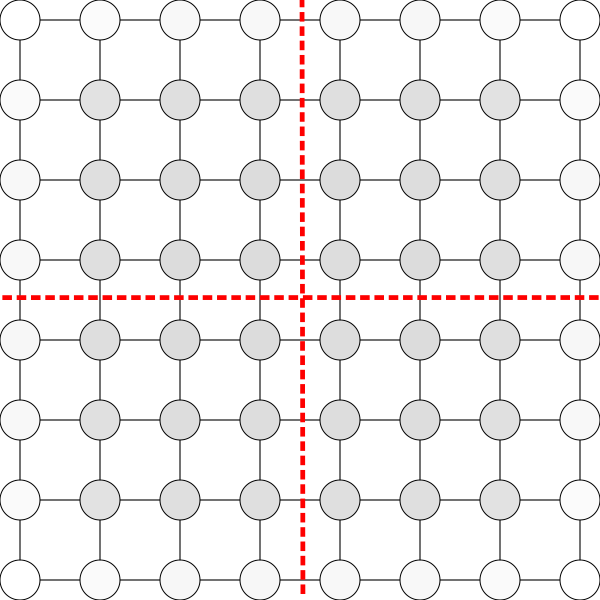}
&\includegraphics[scale=0.15]{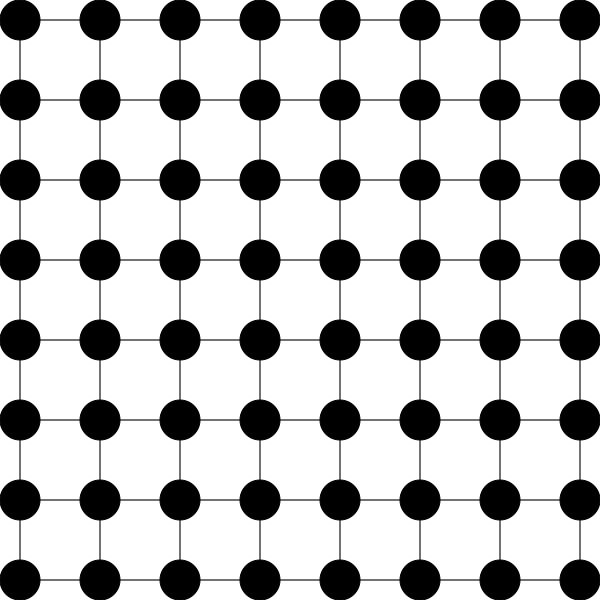}\\
\includegraphics[scale=0.15]{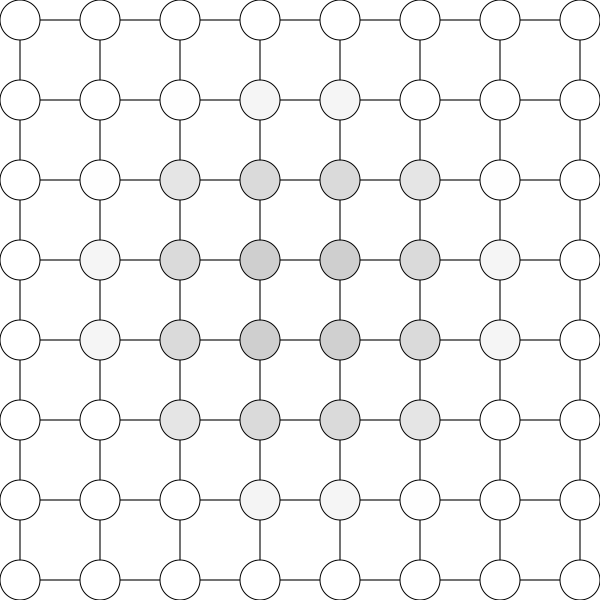}
&\includegraphics[scale=0.185]{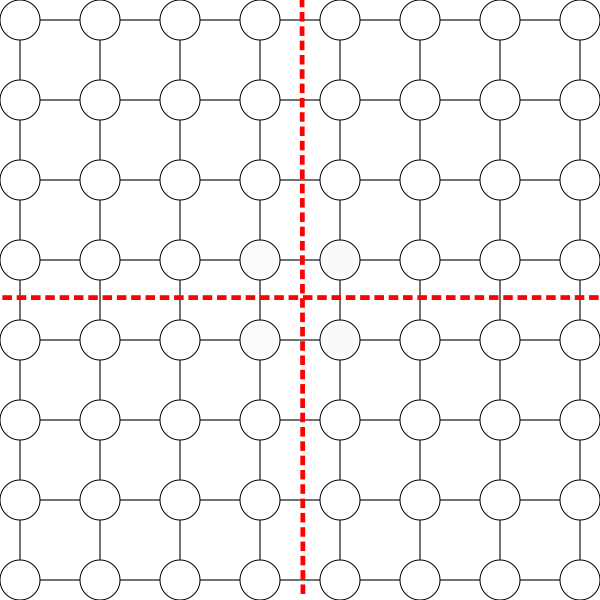}
&\includegraphics[scale=0.15]{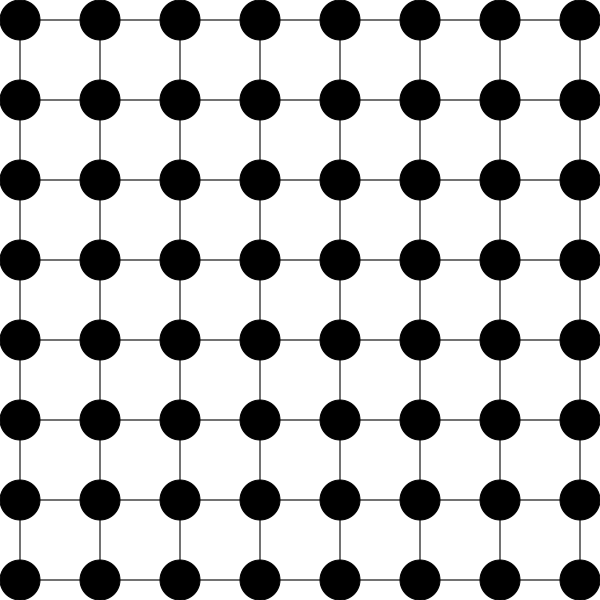}\\
\includegraphics[scale=0.15]{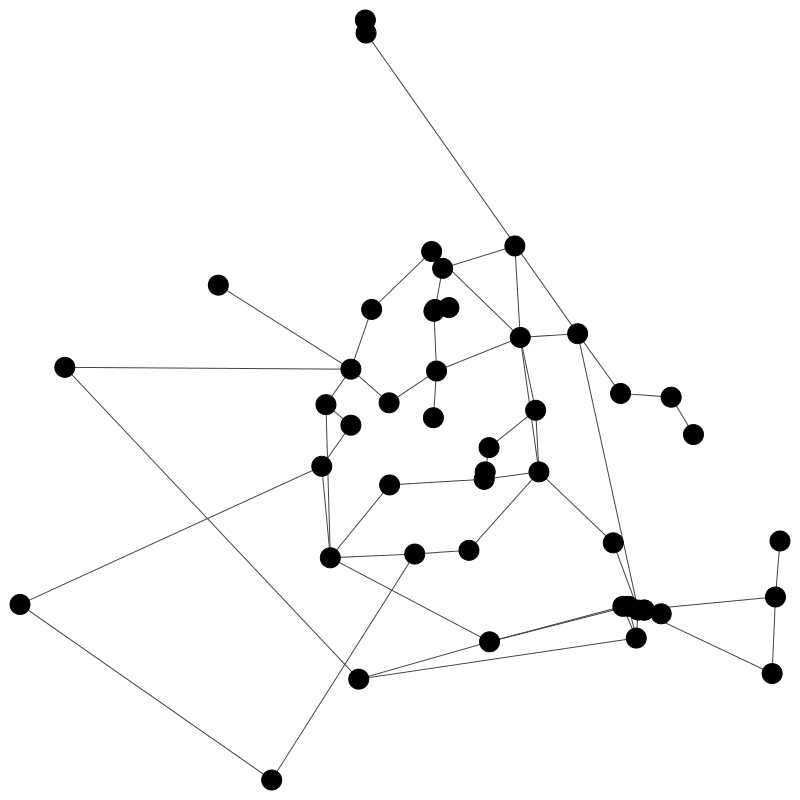}
&\includegraphics[scale=0.175]{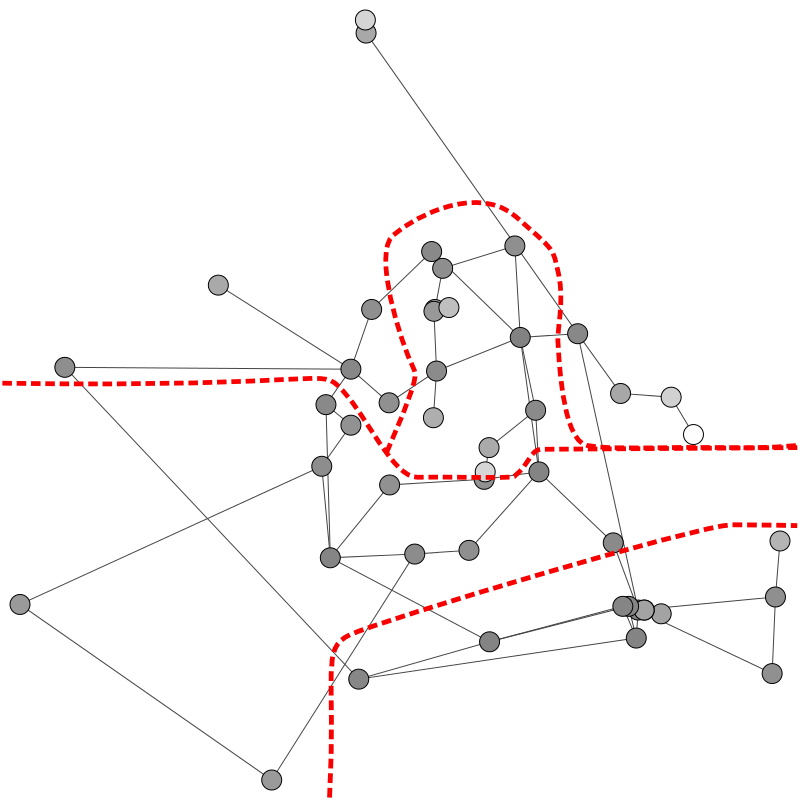}
&\includegraphics[scale=0.15]{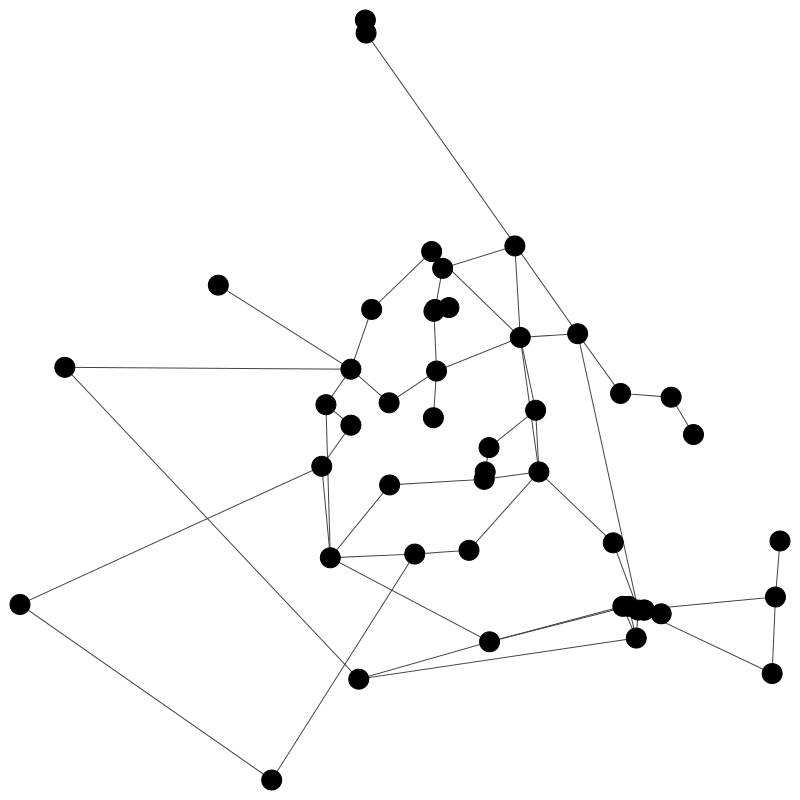}
\end{tabular}
\caption{Strategic realizations for representative games on grid and distribution network topologies. Top: Grid, $p=0.1$, middle: Grid, $p=0.4$, bottom: Power Network, $p=0.7$. Darker node colors indicate higher probability of defense (coverage); so white implies no coverage and black implies that the node is always covered (defended).  Dotted red lines indicate the partition of nodes among players (except in the rightmost case, when each player controls a single node).}

\label{fig:actualStrats}
\end{figure*}
\begin{figure*}[t!]
\centering
\begin{tabular}{ccc}
$p=0.1$ & $p=0.4$ & $p=0.7$\\
\includegraphics[scale=0.19]{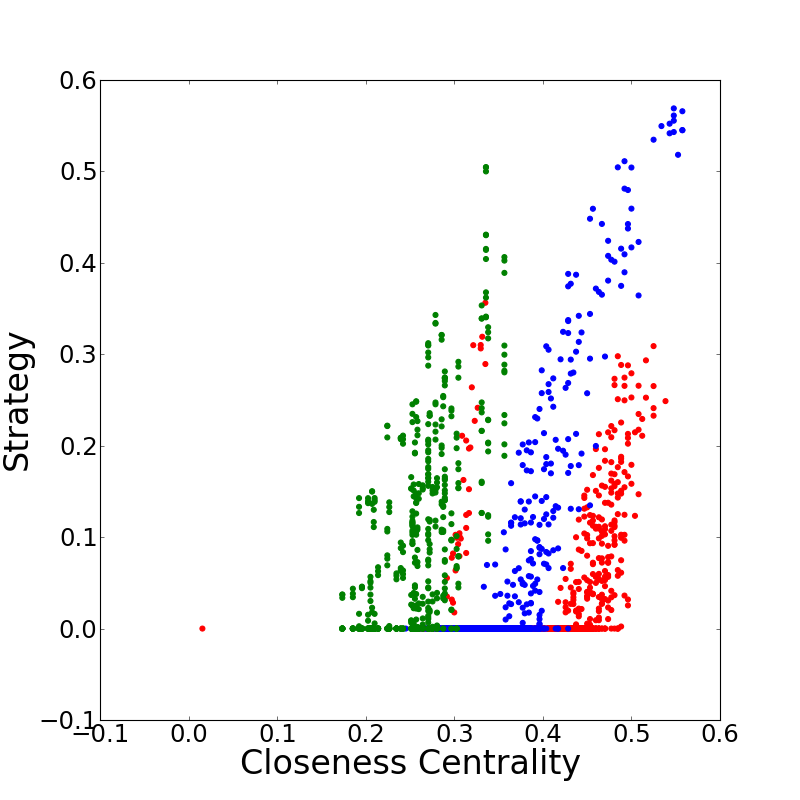}
&\includegraphics[scale=0.19]{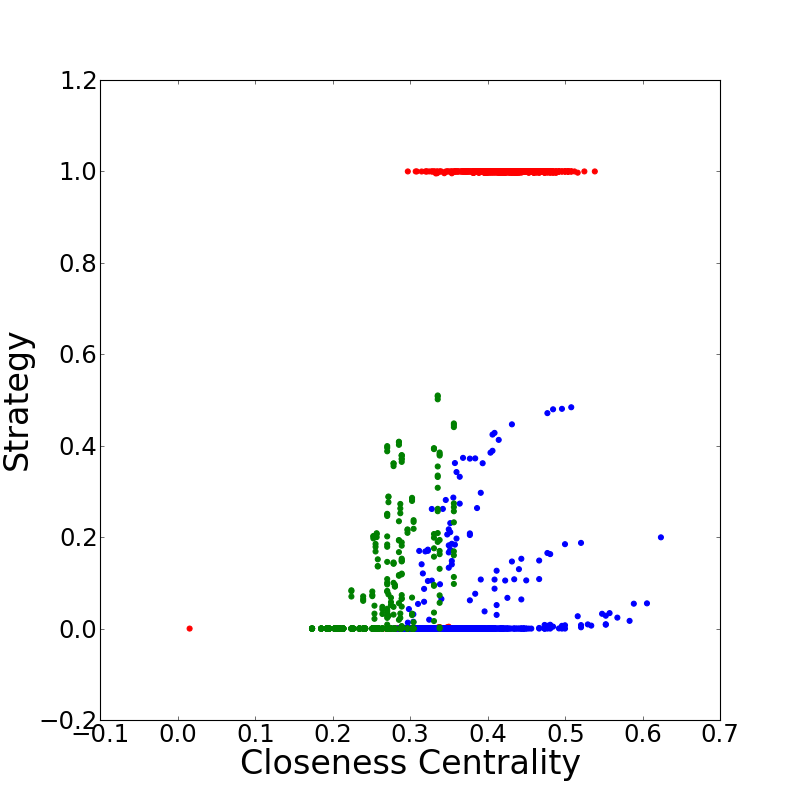} & \includegraphics[scale=0.19]{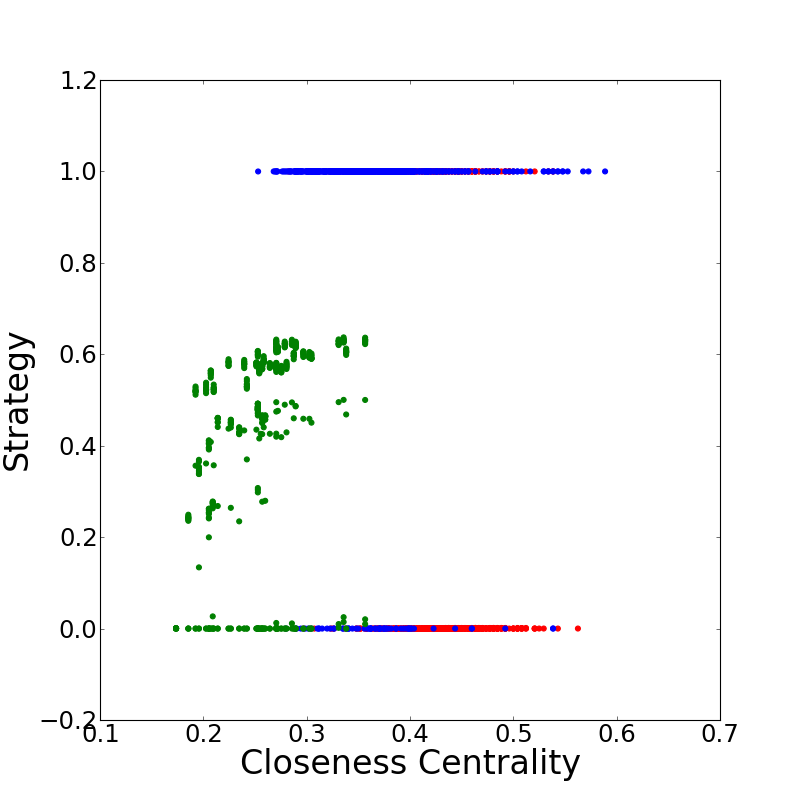} \\
\end{tabular}
\includegraphics[scale=0.47]{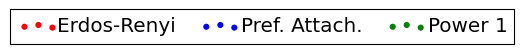}
\caption{Node security and network (closeness) centrality; each defender controls 4 nodes.}
\label{fig:centrality}
\end{figure*}
Investigating actual (average) strategic decisions by the players yields deeper insights into the findings above.
When interdependencies are weak, optimal decision is to invest relatively little in security, in any generative model.
Increasing decentralization, therefore, gives rise primarily to
over-investment, mirroring our analytical results for the limiting
case when targets were independent, 
although the tendency to over-invest is quite weak until the network is extremely decentralized, except in the grid network.
When $p$ is high, on the other hand, the predominant phenomenon is underinvestment.
This in itself is not surprising: after all, a high level of interdependencies should imply that positive externalities of security should be dominant.
What is surprising is, again, non-monotonicity in the level of decentralization: when decentralization is moderate, underinvestment can be quite dramatic.
On the other hand, a high level of decentralization often appears to dull this effect, and the level of investment in security becomes much closer to optimal.


\leaveout{
A likely reason for this phenomenon is that when decentralization is moderate, positive externalities dominate strategic considerations, since the deterrence effect of higher security on one node often pushes the attacker to attack another node belonging to the same player.
With high decentralization, however, positive and negative externalities appear to be in much greater relative balance, as higher investment pushes attacks onto other players.
}

\subsection{Results on power grid networks}
The grid network studied above is arguably the most ``artificial'', in the sense that both Erd\H{o}s-R\'{e}nyi and preferential attachment models were developed in part to resemble real networks (this is particularly true of the latter, which aims to replicate the scale-free properties of observed networks).
Surprisingly, however, the approximate equilibrium results applied to three snippets of actual power networks most resemble the phenomena observed for the grid, as can be seen in Figure~\ref{fig:stratWelfReal}.
In particular, just as in the grid above, over-investment in security appears to dominate, even at relatively high levels of interdependence, but only when decentralization is significant, while most levels of decentralization are relatively robustly near-optimal (these are, in fact, more robust to decentralization than the grid network above).

To dig somewhat deeper into the rather complex phenomenology we have observed, Figure~\ref{fig:actualStrats} shows several examples of actual strategy realizations.
First, consider the top series of plots for the grid with cascade probability $p=0.1$.
As previously described, we can clearly see that the optimal security configuration involves no security investment (leftmost grid), whereas an increasing level of decentralization gives rise to increased security investment, culminating, ultimately, with full protection in the extreme level of decentralization.
The contrast between the two extremes offers some guidance: even
though optimal global configuration involves no security, when each
player controls (and cares about) only a single node, the best
response of an attacked node is to defend it just enough to force the
attacker to attack another; for example, slightly more than the next
weakest node. 
Iterating on this idea, strategies ``cascade'' to full defense.
When the player controls more than one node, however, there is suddenly strategic tension: higher security on one node may well push the attacker to attack another node under this player's control.
Positive externalities become more significant as well: pushing the attacker to attack another node ``nearby'' is likely to gain little when cascade probabilities are high and multiple nodes owned by the defender could be affected.
For sufficiently high cascade probabilities, and sufficiently low number of players, such positive network effects can actually sway players to under-invest in security, as we can see both in the middle and last rows of Figure~\ref{fig:actualStrats} (the 4-player case).
Here, strategic complementarities make security investment not worthwhile in equilibrium: the nodes that need to be defended are relatively central, and cut across different players (i.e., the critical central nodes create a kind of ``buffer'' between defenders). This behavior diminishes as decentralization increases.

\subsection{Security and Network Centrality}

Finally, we consider the relationship between network centrality and strategic choice of a node. The results, shown in Figure~\ref{fig:centrality}, plot each node's closeness centrality and corresponding security across $10$ instances for each network. When the interdependence of networks is low, security investment appears to correlate rather strongly with closeness in synthetic networks: in other words, nodes more central in the network invest more in security. A similar relationship is seen with node degrees. This correlation, however, largely disappears with greater interdependence, likely because the difference between being one-hop vs. two-hops away from an attacked node (i.e., being a low-connected neighbor of a high-connected node) becomes considerably less in such a case.

For the power networks, the correlation between centrality and strategy remains even with higher cascade probabilities. This relationship can be seen by comparing the strategy profile in the second column of Figure~\ref{fig:actualStrats} and the closeness plot for $p=0.7$ for the power network. The structure of the power network has several small "chains" that have low centrality. The node that connects the chain to the rest of the network (higher centrality) incurs most of the defense cost, since a failure cascade would have to pass through this node to reach the rest of the chain. As the amount of interdependence increases (column 3 in Figure~\ref{fig:actualStrats}), these chains become partitioned between defenders, weakening the relationship between centrality and strategy.  

\section{Conclusion}
In this work, we have extended the current state of Stackelberg security games to include multiple defenders in non-cooperative scenarios with independent and interdependent targets. 

For the independent case, we provided complete characterizations of Nash and approximate equilibria, socially optimal solutions, and price of anarchy (PoA) for three models of varying generality. Our analysis showed that defenders generally over protect the targets, but different modelling assumptions give rise to qualitatively different outcomes: a simpler model gives rise to an unbounded PoA, whereas a more general model sees PoA converge to a constant when the number of defenders increases.

For the interdependent case, we developed a novel computation framework to overcome the difficulties of providing a concise formal analysis of such a complex model. Our simulations characterize a broad space of strategic predicaments, varying cascade probability, network structure, and system decentralization. In contrast to the independent models, our results show differing behavior in terms of security investment dependent on the strength and structure of interdependencies. One of our most stark findings is the non-monotonicity of welfare and
strategic choices as a function of the number of players: in a number
of cases, higher levels of decentralization become near-optimal, even
while intermediate decentralization leads to very poor outcomes.
As security decisions are almost universally decentralized, and often
highly interdependent, our findings enable a deeper understanding of
practical security considerations, highlighting the importance of
\emph{both}, over- and under-investment in security, and the dependence
of each on network structure, the magnitude of network externalities, and the level of decentralization. Finally, we have shown how security behavior in our model on real-world power networks relates to those in synthetic networks, highlighting similar behaviors with grid networks, and unveiling structural differences with Erd\H{o}s-R\'{e}nyi and preferential attachment networks using the relationship between strategy and centrality.


\bibliography{mp-defense}
\bibliographystyle{theapa}	

\section*{Appendix}

\EquMultiDefender*
\begin{proof}
We firstly claim that a Nash Equilibrium must have identical coverage probabilities for all targets $t_{ij}$. Otherwise, there will be a target $t_{ik}$ which has the probability 0 of being attacked, and defender $i$ has an incentive to decrease $q_{ik}$. 

When all targets have the same coverage probability $q$, the expected utility of each defender is $$u=\frac{(v-cnk)q-v}{n}.$$
If $q<1$, then some defender $i$ can increase $q$ to $q+\delta$ for all of her targets to make sure none are attacked, obtaining utility $u'=-k(q+\delta)c$, with
$$u'-u=\frac{v(1-q)-nkc\delta}{n}.$$
As $\delta$ can be arbitrarily small, $u'-u>0$ when $q<1$, and it cannot be a Nash equilibrium. 

When all targets have the same coverage probability $q=1$, the expected utility of each defender is
$u=-kc$.
If defender $i$ want to deviate, then one of her targets will be attacked.  Suppose this target is $t_{ir}$, with $q_{ir}<1$.  Then her expected utility is
$$u'=-v+q_{ir}(v-c) +\sum_{j=1,j\neq r}^{k}(-q_{ij}c),$$ and
$$u'-u=(v-c)(q_{ir}-1)+\sum_{j=1,j\neq r}^{k}(-q_{ij}c)+(k-1)c.$$
We now have two cases:
\begin{enumerate}[1)]
\item If $v<c$, then $(v-c)(q_{ir}-1)> 0$ and $\sum_{j=1,j\neq r}^{k}(-q_{ij}c)+(k-1)c\geq 0$, and the defender has an incentive to deviate, and there is no Nash equilibrium.
\item If $v\geq c$, then, because $q_{ij}\geq q_{ir}$ for all $j\neq r$, $$u'-u\leq (v-c)(q_{ir}-1)+(k-1)c-(k-1)q_{ir}c$$ and $$u'-u\leq (q_{ir}-1)(v-kc).$$
\end{enumerate}
In case 2, therefore, $u'-u \le 0$ iff $v \ge kc$, which yields the desired result.
\end{proof}

\AppMultiDefender*
\begin{proof}
When all defenders have the same coverage probability $q$, the expected utility of each defender is
$$u=\frac{(v-cnk)q-v}{n}.$$
Suppose $0\leq q<1$. If some defender $i$ increases $q$ to $q+\delta_{ij}$ for target $t_{ij}$, then she would obtain a utility of $u'=\sum_{j=1}^{k} -(q+\delta_{ij})c$, and
$$u'-u=\frac{v(1-q)}{n}-\sum_{j=1}^{k} \delta_{ij}c<\frac{v(1-q)}{n}.$$

We claim that if a defender $i$ has an incentive to deviate, it is
optimal for this defender to use the same coverage probability for all her targets. Otherwise, for some target $t_{ik}$ which has probability $0$ of being attacked, she could decrease $q_{ik}'$ to obtain higher utility. 
Therefore, we need only consider cases when a defender deviates by decreasing probabilities of all her targets to $q-\delta$. Then her utility is $u''=(v-kc)(q-\delta)-v$, and
$$u''-u=\frac{\delta n(kc-v)+v(n-1)(q-1)}{n}.$$
Since $v<kc$ when $\delta = q$ (the maximal value of $\delta$), the maximal value of $u''-u$ is
$$\max_{0<\delta\leq q} u''-u = \frac{v(1-q)}{n}+kcq-v.$$
Let $d_1=\frac{v(1-q)}{n}$, $d_2=\frac{v(1-q)}{n}+kcq-v$.
Then,
$$d_1-d_2=-kcq+v$$
When $q\leq \frac{v}{kc}$, $d_1\geq d_2$, and it is a $\frac{v(1-q)}{n}$-Nash Equilibrium; when $q>\frac{v}{kc}$, $d_1<d_2$, and it is a $(\frac{v(1-q)}{n}+kcq-v)$-Nash Equilibrium. 
To sum up, for $\epsilon$-Nash Equilibrium, 
\begin{equation}
\epsilon=
\begin{cases}
\frac{v(1-q)}{n}, &\text{if $0\leq q\leq \frac{v}{kc}$;}\\
\frac{v(1-q)}{n}+kcq-v, &\text{if $\frac{v}{kc}<q\leq 1$.}
\end{cases}\nonumber
\end{equation}
When $q=\frac{v}{kc}$, we obtain the minimal $\epsilon=\frac{v(kc-v)}{cnk}$. 

We claim that the $\frac{v(kc-v)}{cnk}$-Nash Equilibrium can appear \emph{only if} all targets have the same coverage probability $q$. Suppose targets have different coverage probabilities. This implies two possibilities: $1)$ Each defender uses identical coverage probabilities for her targets, and differences exist only between defenders; $2)$ Some defender uses diverse coverage probabilities among her own targets. 

We first consider case $1)$, in which targets may have different probabilities of bing protected, but each defender has the same probability of protecting her own targets. In this case there exist $\beta$ defenders ($1\leq \beta<n$) who have the same minimal coverage probability $q'$. The expected utility for each defender among these $\beta$ defenders is: 
$$u_e=\frac{(v-kc\beta)q'-v}{\beta}.$$
When $\frac{v}{kc}<q'\leq 1$, some defender $i$ among these $\alpha$ defenders can decrease probability of all her targets to $0$ to obtain a utility of $u_1=-v$, with
$$u_1-u_e=\frac{v(1-q')}{\beta}+(kcq'-v)>\frac{v(1-q')}{m}+(kcq'-v).$$
When $0\leq q'\leq \frac{v}{c}$, some defender $i$ among these $\beta$ defenders can = increase probability of all her targets to $q'+\delta_3$ to obtain a utility of
 $u_2=-k(q'+\delta_3)c$, with
 $$u_2-u_e=\frac{v(1-q')-kc\beta\delta_3}{\beta}>\frac{v(1-q')}{n},$$
where inequation holds because $\delta_3$ can be arbitrarily small. Consequently, case 1 cannot have a $\frac{v(kc-v)}{cnk}$-Nash Equilibrium. 

Now we consider case $2)$, in which there exists a defender who uses different coverage probabilities among her targets. As some of her targets have probability $0$ of being attacked, she could increase her payoff by decreasing probabilities of all of these targets to be as small as her target with the lowest coverage probability. It means that each coverage profile in case $2$ can always be transformed to a coverage profile in case 1 in a way that improves a defender's utility without hurting anyone else. This means that case 2 also cannot have a $\frac{v(kc-v)}{cnk}$-Nash Equilibrium. 
\end{proof}

\SocialMultiDefender*
\begin{proof}[Proof sketch]
We firstly claim that we could obtain optimal social welfare \emph{only if} all targets have the same coverage probability $q$. Otherwise, some target $t_{ij}$ has the probability $0$ of being attacked, and we can decrease $q_{ij}$ to improve social welfare. Consequently, we need only to consider an optimal identical coverage probability $q$ to obtain optimal social welfare, which can be done in a way similar to the baseline case.
\end{proof}
	
\end{document}